\newcommand\BibTeX{{\rmfamily B\kern-.05em \textsc{i\kern-.025em b}\kern-.08em
T\kern-.1667em\lower.7ex\hbox{E}\kern-.125emX}}
\numberwithin{equation}{section}
\newtheorem{theorem}{Theorem}
\newtheorem{lemma}{Lemma}
\newtheorem{proposition}{Proposition}
\newtheorem{remark}{Remark}
\newtheorem{example}{Example}
\newcommand\RE{\mathbb{R}}
\newcommand\ACal{\mathcal{A}}
\newcommand\SCal{\mathcal{S}}
\DeclareMathOperator{\Convex}{conv} 
\DeclareMathOperator{\Sign}{sign}
\DeclareMathOperator{\Diag}{diag}
\DeclareMathOperator*{\Argmax}{arg\,max}
\DeclareMathOperator*{\Argmin}{arg\,min}
\begin{document}


\title{Min-max piecewise constant optimal control for \\ multi-model linear systems}

\author{F\'elix A. Miranda, Fernando Casta\~nos and Alexander Poznyak}



\maketitle

\begin{abstract}
The present work addresses a finite-horizon linear-quadratic optimal control problem
for uncertain systems driven by piecewise constant controls. The precise values of the
system parameters are unknown, but assumed to belong to a finite set (i.e., there exist
only finitely many possible models for the plant). Uncertainty is dealt with using a min-max
approach (i.e., we seek the best control for the worst possible plant). The optimal control
is derived using a multi-model version of Lagrange's multipliers method, which specifies the
control in terms of a discrete-time Riccati equation and an optimization problem over a simplex.
A numerical algorithm for computing the optimal control is proposed and tested by simulation.
\end{abstract}




\section{Introduction}



Multi-model dynamical systems arise in several areas of control: Some times as
a result of the uncertainty in the system parameters and some times when
the original model is largely complicated and needs to be divided into subsystems,
each of which characterizes an important feature in some region of the parameter
or the state space.
 
For this class of models, the optimal control problem can be formulated in such a
way that an operation of maximization is taken over the set of uncertainties and
an operation of minimization is taken over the control strategies. This is known
as a min-max optimal control problem~\cite{Boltyanski2011}. In this approach, 
the original system model is replaced by a finite set of dynamic models 
such that each model describes a particular uncertain case. 

The multi-model min-max optimal control problem has been considered 
in several works like~\cite{Varga1996,Poznyak2002,Azhmyakov2010,
Besselmann2012, Ramirez2002, Bemporad2003}, to name a few. The purpose
in these references is to obtain a control signal $u(\cdot)$ which guarantees that the cost
does not exceed the ``worst cost'' incurred by the plant realizing the ``worst parameters''.
This problem was solved in~\cite{Poznyak2002} using the so-called robust maximum principle.
On the other hand, the multi-model control problem was studied in~\cite{Azhmyakov2010} from a
dynamic-programming perspective, and a natural relationship between dynamic programming and the
robust maximum principle was established for a class of linear-quadratic (LQ) problems.

In references \cite{Besselmann2012} and \cite{Bemporad2003} a min-max
control problem is solved using a Model Predictive Control (MPC) approach, where the cost functional is considered with a polyhedral
norm (a $p$-norm with $p = 1$ or $\infty$). As reported in 
\cite{Besselmann2012}, there is no efficient technique
for finding the solution of quadratic cost problems ($p = 2$).
The case of norm $p = 2$ is treated in \cite{Ramirez2002} following a
MPC view point for the case of unidimensional control ($m = 1$),
where the authors obtain a characterization of the optimal control in terms of the possible location of the optimal solution, resulting in a
combinatorial problem whose size grows exponentially with the number of plants.

The multi-model control problem is also relevant when the designer
is not only concerned about optimizing control effort, but it is interested in optimizing communication
bandwidth as well. Bandwidth optimization is common in applications belonging to the field of networked
control systems, where the controller is not completely dedicated to the plant or it does not have full
access to the network resources at every time (see e.g.~\cite{Hespanha2007,Yang2006}). The communication
constraint leads to the consideration of a particular set of admissible controls, given by piecewise
constant functions on the interval $[t_0, t_N]$, which in general are not uniformly spaced in time.
Non-uniformity is motivated by the fact that, in terms of bandwidth optimization, a uniform
sampling is not necessarily a ``good'' choice~\cite{Bini2014}.
Non-uniform switching sequences also appear in the field of compressed
sampling~\cite{Bryan2013,Candes2008,Nagahara2012}, where there is a given number $m$ of samples
which, in general, are not uniformly spaced in time either. The objective is to recover the full signal in 
the receptor side. This technique has shown to be promising in band-limited networked control
applications~\cite{Nagahara2012b}.

%
%

Motivated by the applications described above, we restrict the control actions
to the class of piecewise constant functions of time. We also assume that the
sequence of switching times is known \emph{a priori}, inasmuch as the choice
of the switching times can be carried out independently from the choice of the
control levels when performing bandwidth optimization~\cite{Skafidas1998,Xu2004}. 



The purpose of this paper is to derive an optimal min-max control strategy in an \emph{analytical} fashion,
for the case when parametric uncertainty belongs to a finite set. That is,
when there is a finite set of possible models (see Remark \ref{remark:convexSet} ). The LQ optimal control
problem is stated formally in Section~\ref{sec:ProblemFormulation}.
Motivated by the piecewise constant nature of the control laws, the problem
will be reformulated in a discrete-time context.

\paragraph*{Main Contribution.}
Theorem~\ref{th:Contribution} in Section~\ref{sec:main} states the
solution to the LQ problem. Using convex analysis and generalized gradients, a
discrete-time extended Riccati equation is obtained. The equation is parametrized
by $\mu$, a vector whose elements are convex multipliers for the costs incurred by the
individual plants. To compute the optimal control, a maximization problem constrained
to a finite dimensional simplex has to be solved for $\mu$. Additionally, it is shown
that a complementary slackness condition holds between $\mu$ and the individual costs.

In Section~\ref{sec:numAlg} we propose a numerical algorithm to find $\mu$ and the
corresponding optimal control, this algorithm can be used in
conjunction with MPC . It is based on the classical gradient with projection
approach, but the gradient is approximated using Kiefer-Wolfowitz algorithm~\cite{Poznyak2009}.
The complementary slackness condition turns out to be critical in deciding a practical
stopping criterion for the numerical algorithm.

Finally, we present numerical examples illustrating the usefulness of Theorem~\ref{th:Contribution},
the feasibility of the numerical algorithm and the soundness of the min-max paradigm when confronted to multi-models.



\section{Problem Formulation} \label{sec:ProblemFormulation}

Consider the following general continuous-time linear system with parametric uncertainty:
\begin{equation} \label{eq:ContinuousSys}
 \dot{x}(t) = A^{\alpha}(t) x(t) + B^{\alpha}(t) u(t) \;, \quad x(t_0) = x_0 \;, \quad \alpha \in \ACal \;, \quad t \in [t_0, t_N] \;,
\end{equation}
where $x(t) \in \mathbb{R}^{n}$ is the system state vector and $u(t) \in \mathbb{R}^{m}$ 
is the control input.
The actual realization of the system and input matrices $A^\alpha(t)$ 
and $B^\alpha(t)$ is unknown, but these belong to a known finite set 
which is indexed by $\alpha$. In other words, equation~\eqref{eq:ContinuousSys} 
describes one realization taken out from a finite set of possible systems. 
The index $\alpha$ is taken to be constant, \emph{but unknown},
through all the process lifetime, i.e., from $t_0$ until $t_N$.
For a fixed $\alpha \in \ACal$, we will denote the solution of~\eqref{eq:ContinuousSys} 
by $x^{\alpha}(t)$.

\subsection{Admissible Controls}

The family of admissible controls takes a stepwise form. 
This is motivated by some applications in networked control systems.
For example, when passing through a digital band-limited communication 
channel, the control input $u$ may only take a finite number of changes 
(switches) on its levels over the whole interval $[t_0, t_N]$. 
Let the times at which these switches occur be given by a monotonically 
increasing sequence\footnote{Note that, since the controller is updated 
at most $N-1$ times, a reduction on $N$ directly translates into a reduction 
of the required band-width of the communication channel.}
\begin{equation} \label{eq:delta}
 \delta = \{t_0, t_1, \ldots , t_{N-1} \} \;,
\end{equation}
where $\delta$ is bounded by $t_N$. The set of admissible controls is given by
\begin{multline} \label{eq:UAdms}
 \mathcal{U}_{\mathrm{ad}}^\delta = \Big\lbrace u:[t_0, t_N] \rightarrow \mathbb{R}^{m} \:\big | \: 
  u(t) = \textstyle\sum\limits_{k = 0}^{N-1} \chi_{[t_k, t_{k+1})}(t) v_k, \\ t_k \in \delta, \; 
   v_k \in \mathbb{R}^{m} , \; k = 0,\ldots,N-1 \Big\rbrace 
\end{multline}
with
\begin{displaymath}
 \chi_{[t_k,t_{k+1})}(t) = 
  \begin{cases} 1, & \text{if } t \in [t_k, t_{k+1}) \\
                0, & \text{ otherwise}
  \end{cases}
\end{displaymath}
the characteristic function of the interval $[t_k, t_{k+1})$ and $v_k$ the value 
of $u(t)$ when $t \in [t_k, t_{k+1})$. It is worth mentioning that the switching
sequences $\delta$ that are taken under consideration are non uniform in general.
The reason is that, regarding minimal bandwidth consumption, a uniform sampling
is not always the best choice (see e.g.~\cite{Bini2014} and Remark \ref{remark:delta}
below).

\subsection{Problem Statement}

Associated to~\eqref{eq:ContinuousSys}, we consider the quadratic cost functional
\begin{equation} \label{eq:ContinuousCost}
 J^{\alpha}(u) = \dfrac{1}{2} x^{\alpha}(t_N)^{\top} G x^{\alpha}(t_N) + 
  \dfrac{1}{2} \int_{t = 0}^{t_N} x^{\alpha}(t)^{\top} Q(t) x^{\alpha}(t) + 
  u(t)^{\top} R(t) u(t) \mathrm{d}t \;,
\end{equation}
where the usual positiveness conditions are assumed:
\begin{equation} \label{eq:positive}
 G = G^{\top} \geq 0 \;, \quad Q(t) = Q(t)^{\top} \geq 0 \quad \text{and} \quad R(t) = R(t)^{\top} \geq \varepsilon I 
\end{equation}
for all $t \in [t_0, t_N]$ and some $\varepsilon > 0$.

The optimization problem consists in finding a control action 
$u^{*}(\cdot) \in \mathcal{U}_{\mathrm{ad}}^\delta$ that provides a ``good''
behavior for all systems from the given collection of models~\eqref{eq:ContinuousSys},
including the ``worst'' case. The resulting control strategy is applied to~\eqref{eq:ContinuousSys},
regardless of the actual $\alpha$-realization. We state this formally as the min-max
optimization problem
\begin{equation} \label{eq:minMaxProblem}
\begin{split}
& \text{minimize } J(u) \\
& \text{subject to } u(\cdot) \in \mathcal{U}_{\mathrm{ad}}^\delta \;,
\end{split}
\end{equation}
where $J(u) = \max_{\alpha \in \ACal} J^{\alpha}(u)$ subject to \eqref{eq:ContinuousSys} 
(see, e.g.,~\cite{Azhmyakov2010} and~\cite{Boltyanski2011}). 

\begin{remark} \label{remark:delta}
All admissible controls $u(\cdot)$ clearly depend on the switching sequence $\delta$ but, as it
is noted in~\cite{Skafidas1998} and~\cite{Xu2004}, it is possible to perform optimization in two stages:
First, seek an optimal switching sequence $\delta$ and then look for an optimal control input for the
given switching sequence. We focus on the second stage, so it is assumed that the switching sequence
is given \emph{a priori}.
\end{remark}
\begin{remark} \label{remark:convexSet}
It is worth to mention that, for the case when the set of considered 
uncertainties is a compact polyhedron (i.e, we have a infinite number
of possible realizations of \eqref{eq:ContinuousSys}), and the map 
$\alpha \mapsto J^{\alpha}(u)$ is convex, the problem is reduced to
study only those uncertainties what belongs to the set of extreme
points of the polyhedron (i.e., its vertices) \cite[Theorem 3.4.7]
{Bazaraa2006}. In this case the problem falls in our setting 
(see \cite{Ramirez2002} for an application of this fact).
\end{remark}

The stepwise nature of the control motivates the use of a discrete-time approach, using $\delta$ as the sampling
time-sequence. Indeed, the problem is finite-dimensional, since it is only necessary to find $N$ values of the control signal.
Now, let 
\begin{equation} \label{eq:decision}
 v = \begin{bmatrix} v_0^{\top} & \cdots & v_{N-1}^{\top} \end{bmatrix}^{\top} \in \mathbb{R}^{m N}
\end{equation}
be the decision vector and allow us construct the vector $x^\alpha$, obtained by appending each point of the resulting
discrete-time orbit,
\begin{displaymath}
 x^\alpha = \begin{bmatrix} x_1^{\alpha {\top}} & x_2^{\alpha {\top}} & \cdots & x_N^{\alpha {\top}} \end{bmatrix}^{\top} \in \mathbb{R}^{nN} \;.
\end{displaymath} 
The discrete-time representation of problem~\eqref{eq:minMaxProblem} is thus
\begin{equation} \label{eq:minMaxDisc}
 \text{minimize } \bar{J}(v, x, x_0) \;,
\end{equation}
where the discrete costs are given by
\begin{displaymath}
 \bar{J}(v,x,x_0) = \max_{\alpha \in \ACal} \bar{J}^{\alpha}(v,x^\alpha,x_0)
\end{displaymath}
subject to
\begin{displaymath}
 x_{k+1}^{\alpha} = \Phi^{\alpha}_{k} x^{\alpha}_k + \Gamma^{\alpha}_{k}v_k \;, \quad x^{\alpha}_0 = x_0 \;, \quad \alpha \in \ACal \;, \quad k= 0,\ldots,N-1
\end{displaymath}
with
\begin{equation}
 \bar{J}^{\alpha}(v,x^\alpha,x_0) = \dfrac{1}{2} x^{\alpha {\top}}_{N} G x^{\alpha}_N + \dfrac{1}{2} 
  \sum_{k = 0}^{N-1} \left[ x^{\alpha {\top}}_k \Pi_k^{\alpha} x^{\alpha}_k + 2 x^{\alpha {\top}}_k 
  \Theta^{\alpha {\top}}_k v_k + v_k^{\top} \Psi_k^{\alpha} v_k \right] \label{eq:DiscCrossFunc} \;.
\end{equation}
%
To alleviate the notation, we will denote the cost functional 
$\bar{J}^{\alpha}(v, x^{\alpha}, x_0)$ simply by $\bar{J}^{\alpha}(v, x^{\alpha})$, bearing 
in mind the dependence of the initial condition in the cost\footnote{Moreover, we will write
in~\eqref{eq:W} the cost as a function of $v$ only, to reflect the fact that $x^\alpha$ depends
on $v$ through the constraint equations.}.

It is straightforward to verify that the weighting matrices in~\eqref{eq:DiscCrossFunc} are
\begin{align*}
 \Gamma^{\alpha}(t,t_k) &:= \int_{t_k}^{t} \Phi^{\alpha}(t,\tau) B^{\alpha} (\tau) \mathrm{d} \tau \in \mathbb{R}^{n \times m} \;, \\
         \Pi_k^{\alpha} &:= \int_{t_k}^{t_{k+1}} \Phi^{\alpha}(t,t_k)^{\top} Q(t)  \Phi^{\alpha}(t,t_k) \mathrm{d}t \in \mathbb{R}^{n \times n} \;, \\
      \Theta_k^{\alpha} &:= \int_{t_k}^{t_{k+1}} \Gamma^{\alpha}(t,t_k)^{\top} Q(t) \Phi^{\alpha}(t,t_k) \mathrm{d}t \in \mathbb{R}^{m \times n} \;, \\
        \Psi_k^{\alpha} &:= \int_{t_k}^{t_{k+1}} \left( \Gamma^{\alpha}(t,t_k)^{\top} Q(t) \Gamma^{\alpha}(t,t_k) + R(t) \right) \mathrm{d}t \in \mathbb{R}^{m \times m} \;, \\
    \Gamma^{\alpha}_{k} &:= \Gamma^{\alpha}(t_{k+1}, t_k)
\end{align*}
and $\Phi^{\alpha}_{k} := \Phi^{\alpha}(t_{k+1},t_k)$ with $\Phi^{\alpha}(t,t_0)$ the state transition matrix for
the $\alpha$-system~\eqref{eq:ContinuousSys}. Notice that $\Pi_k^{\alpha} = \Pi_k^{\alpha {\top}} \geq 0$ and
$\Psi_k^{\alpha} = \Psi_k^{\alpha {\top}} > 0$ for all $k = 0,\ldots, N-1$.


\section{Solution to the Min-Max Problem} \label{sec:Min-maxSolution}

\subsection{Multi-model method of Lagrange multipliers}

Recall that, in the classical discrete-time LQ problem (i.e., when $\ACal$ 
is a singleton, $\ACal = \{ 1 \}$), the optimal control can be 
obtained using the Lagrange multipliers framework~\cite[ch.~8]{Ogata1995}.
The following lemma states that the framework is still valid, \emph{mutatis mutandis},
for the discrete-time multi-model case. 

\begin{lemma} \label{lemma:MultiLagrange}
Let $(v^{*}, x^{*})$ be an optimal pair solution of~\eqref{eq:minMaxDisc}, then the following
conditions are satisfied:
\begin{subequations} \label{eq:Opt}
\begin{align}
 0 & \in \Convex \{ \nabla_{v} L^{\alpha} (v^* ,x^{*}, \lambda^{\alpha}) : \alpha \in I(v^*)\} \label{eq:Opt_1} \\
 0 & = \nabla_{x} L^{\alpha}(v^*, x^{*}, \lambda^{\alpha}) \text{ for all } \alpha \in I(v^*) \label{eq:Opt_2} \\
 0 & = \nabla_{\lambda} L^{\alpha}(v^*, x^{*}, \lambda^{\alpha}) \text{ for all } \alpha \in I(v^*) \label{eq:Opt_3} \;,
\end{align}
\end{subequations}
where $\Convex$ denotes convex closure and $L^{\alpha}: \RE^{mN} \times \RE^{nN} \times \RE^{nN} \rightarrow \RE$
denotes the Lagrangian of the problem, that is,
\begin{equation} \label{eq:Lagrangian}
 L^{\alpha} (v, x^{\alpha }, \lambda^{\alpha}) = \bar{J}^{\alpha}(v, x^{\alpha}) + 
  \lambda^{\alpha \top} g^{\alpha}(v, x^{\alpha})
\end{equation}
with
\begin{align*}
 g^{\alpha}(v, x^{\alpha}) &= \left[g^{\alpha \top}_0(v, x^{\alpha}), \ldots, 
  g^{\alpha \top}_{N-1}(v, x^{\alpha}) \right]^{\top} \in \RE^{nN} \\ 
 g^{\alpha}_k(v, x^{\alpha}) &= -x_{k+1}^{\alpha} + 
  \Phi^{\alpha}_{k} x^{\alpha}_k + \Gamma^{\alpha}_{k} v_k \;.
\end{align*} 
$I(v)$ denotes the set of indices where $\bar{J}^{\alpha}$ 
reaches the maximum value (with $v$ fixed), i.e.,
\begin{displaymath}
 I(v) = \{ \alpha \in  \ACal : \bar{J}^{\alpha}(v, x^{\alpha}) = \bar{J}(v, x) \} \;.
\end{displaymath}
\end{lemma}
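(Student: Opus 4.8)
The plan is to first eliminate the state variables through the dynamics, reducing the constrained min-max problem to an unconstrained one in $v$, and then apply the first-order optimality calculus for a finite pointwise maximum of smooth functions. For each fixed $\alpha$, the constraint $g^\alpha(v, x^\alpha) = 0$ is exactly the linear recursion $x_{k+1}^\alpha = \Phi_k^\alpha x_k^\alpha + \Gamma_k^\alpha v_k$ with $x_0^\alpha = x_0$, so the trajectory $x^\alpha$ is a uniquely determined affine function of $v$, say $x^\alpha = x^\alpha(v)$. This lets me define the reduced cost $W^\alpha(v) := \bar{J}^\alpha(v, x^\alpha(v))$, which is $C^1$ (indeed quadratic) in $v$ alone, and rewrite the objective of \eqref{eq:minMaxDisc} as $W(v) = \max_{\alpha \in \ACal} W^\alpha(v)$.

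Next I would invoke the optimality condition for the pointwise maximum of finitely many smooth functions. At a minimizer $v^*$ the one-sided directional derivative satisfies $W'(v^*; d) = \max_{\alpha \in I(v^*)} \langle \nabla_v W^\alpha(v^*), d \rangle \geq 0$ for every direction $d$, where $I(v^*)$ is precisely the active index set defined in the statement. The absence of a direction that is a common descent direction for all active indices is then converted, through a separating-hyperplane (Gordan/Farkas type) argument, into the inclusion $0 \in \Convex\{\nabla_v W^\alpha(v^*) : \alpha \in I(v^*)\}$.

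The remaining task is to re-express $\nabla_v W^\alpha$ by means of the Lagrangian \eqref{eq:Lagrangian}. For each active $\alpha$ I would introduce the costate $\lambda^\alpha$ as the unique solution of the adjoint equation $\nabla_x L^\alpha(v^*, x^*, \lambda^\alpha) = 0$, which is solvable because the Jacobian $\partial g^\alpha / \partial x^\alpha$ is block lower-triangular with $-I$ on its diagonal and hence invertible; this is exactly \eqref{eq:Opt_2}. Differentiating $g^\alpha(v, x^\alpha(v)) = 0$ and substituting the implicit derivative $\partial x^\alpha / \partial v = -(\partial g^\alpha/\partial x)^{-1}(\partial g^\alpha/\partial v)$ into the chain rule for $\nabla_v W^\alpha$, the above choice of $\lambda^\alpha$ yields the reduced-gradient identity $\nabla_v W^\alpha(v^*) = \nabla_v L^\alpha(v^*, x^*, \lambda^\alpha)$, which turns the inclusion into \eqref{eq:Opt_1}. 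Condition \eqref{eq:Opt_3} is nothing but primal feasibility $g^\alpha(v^*, x^*) = 0$, i.e., the dynamics themselves.

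I expect the main obstacle to be the second step: carefully justifying the directional-derivative formula for $\max_\alpha W^\alpha$ at the optimum and the passage from ``no common descent direction'' to membership in the convex hull of active gradients, which is where the separating-hyperplane argument and the role of the active set $I(v^*)$ must be made precise. A secondary technical point is verifying that the adjoint system determines $\lambda^\alpha$ consistently with the reduced-gradient identity; this rests entirely on the triangular, hence invertible, structure of $\partial g^\alpha/\partial x^\alpha$.
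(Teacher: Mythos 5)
Your proposal is correct and follows essentially the same route as the paper: eliminate the state via the (invertible, block-triangular) dynamics to get the reduced costs $W^\alpha(v)$, apply the first-order optimality condition for a finite pointwise maximum to obtain $0 \in \Convex\{\nabla_v W^\alpha(v^*) : \alpha \in I(v^*)\}$, and recover the Lagrangian form by defining $\lambda^\alpha$ through the adjoint equation and the implicit function theorem. The only cosmetic difference is that you derive the convex-hull condition from the directional-derivative formula plus a Gordan-type separation argument, whereas the paper simply cites the known formula $\partial_c W(v) = \Convex\{\nabla W^\alpha(v) : \alpha \in I(v)\}$ for the subdifferential of a finite max of convex functions.
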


\begin{proof}
In order to make the proof easier to follow, we introduce the function
\begin{equation} \label{eq:W}
 W^{\alpha}(v) := \bar{J}^{\alpha}(v, x^{\alpha}(v)) \;,
\end{equation}
where we are exploiting the fact that $x^{\alpha}$ is a function of $v$ through 
the restriction equations. Indeed, is easy to see that for each $v$, $x^{\alpha}$
is uniquely defined by the constraint $g^{\alpha}(v, x^{\alpha}) = 0$. Moreover,
$W^{\alpha}(\cdot)$ is again convex since it is the composition of an affine and
a convex function.
  
Our immediate goal is to derive necessary conditions for optimality of the problem
\begin{displaymath}
 \text{minimize: } \bar{J}(v, x) \;,
\end{displaymath}
which is equivalent to
\begin{equation} \label{eq:Cost_v}
\text{minimize } W(v)
\end{equation}
with $W(v) = \max_{\alpha \in \ACal} W^{\alpha}(v)$. Note that $W$ is 
obtained by choosing the maximum among a finite set of functions, so it is in general
a non-smooth function, even in the case where each $W^{\alpha}(v)$ is smooth. 

Is well know~\cite[p.~70]{Makela1992} that a necessary condition for optimality 
of~\eqref{eq:Cost_v} is $0 \in \partial_c W(v^{*})$, where $\partial_c W$ 
denotes the convex subdifferential of $W$ at $v^{*}$\footnote{Note that $W$ is
the maximum of a finite set of convex functions, which is again convex.}.
The subdifferential of the max function has been reported in the literature 
(see e.g.~\cite[p. 49]{Makela1992}) and it is known to satisfy $\partial_c W(v) 
= \Convex \left\lbrace \nabla W^{\alpha}(v) : \alpha \in I(v) \right\rbrace$,
where $I(v) = \{ \alpha \in \ACal : W^{\alpha}(v) = W(v) \}$ 
(i.e., $I(v)$ is the set of indices where the maximum is reached). Thus, we have 
the following necessary condition for optimality
\begin{displaymath}
 0 \in \Convex \left\lbrace \nabla W^{\alpha}(v^{*}) : \alpha \in I(v^{*}) \right\rbrace \;.
\end{displaymath}
It is worth stressing the fact that the right-hand side does not involve all the possible
$\alpha$-realizations of the uncertain plant. Only those plants for which the maximum is
reached play a role in the optimization procedure --- In the following section we propose 
a method  for finding these \emph{extreme} plants (see also Remark~\ref{remark:3}).

From the definition of $W^{\alpha}$ we have
\begin{displaymath}
 \nabla W^{\alpha}(v) = \nabla_v \bar{J}^{\alpha}(v,x^{\alpha}) + 
  \nabla_x \bar{J}(v,x^{\alpha}(v)) \nabla{x^{\alpha}(v)} \;,
\end{displaymath}
where we defined the gradient of a map $F: \RE^{n} \rightarrow \RE^{m}$ as
\begin{displaymath}
 \nabla F(x) = \begin{bmatrix}
  \frac{\partial F_1(x)}{\partial x_1} & \cdots & \frac{\partial F_1(x)}{\partial x_n} \\
  \vdots & \ddots & \vdots \\
  \frac{\partial F_m(x)}{\partial x_1} & \cdots & \frac{\partial F_m(x)}{\partial x_n}
\end{bmatrix} \in \RE^{m \times n}
\end{displaymath}
(i.e., we adhere to the convention that the gradient of a scalar function is as a row vector).

Since the plant is a well defined system, there exist at least one pair $(\bar{v}, \bar{x}^{\alpha})$
for which $g^{\alpha}(\bar{v}, \bar{x}^{\alpha}) = 0$. Furthermore,
\begin{displaymath}
 \nabla_x g^{\alpha} (v,x^{\alpha}) =
  \begin{bmatrix} 
              -I_n &               0 &      0 & \cdots &                   0 & 0 \\
   \Phi_1^{\alpha} &            -I_n &      0 & \cdots &                   0 & 0 \\
                 0 & \Phi_2^{\alpha} &   -I_n & \cdots &                   0 & 0 \\
            \vdots &          \vdots & \vdots & \ddots &              \vdots & \vdots \\
                 0 &               0 &      0 & \cdots &                -I_n & 0 \\
                 0 &               0 &      0 & \cdots & \Phi_{N-1}^{\alpha} & -I_n
 \end{bmatrix}
\end{displaymath}
is an invertible matrix, so it is possible to apply the implicit function theorem,
write $x^{\alpha}$ as a function of $v$ and write the gradient
\begin{displaymath}
 \nabla x^{\alpha}(v) = - \left[ \nabla_x g^{\alpha}(v,x^{\alpha}(v)) \right]^{-1} 
 \nabla_v g^{\alpha}(v,x^{\alpha}(v)) \;.
\end{displaymath}
This results in
\begin{multline*}
 \nabla W^{\alpha} (v) = \nabla_v \bar{J}^{\alpha} (v,x^{\alpha}(v)) 
  - \\
  \nabla_x \bar{J}^{\alpha} (v,x^{\alpha}(v)) 
   \left[ \nabla_x g^{\alpha}(v,x^{\alpha}(v)) \right]^{-1} 
\nabla_v g^{\alpha}(v,x^{\alpha}(v)) \;.
\end{multline*}
Now, define $\lambda^{\alpha \top} := -\nabla_x \bar{J}^{\alpha}
(v,x^{\alpha}(v)) \left[ \nabla_x g^{\alpha}(v,x^{\alpha}(v)) 
\right]^{-1} \in \RE^{1 \times nN}$, write the Lagrangian 
as in \eqref{eq:Lagrangian} and immediately obtain~\eqref{eq:Opt_1}. From the
definition of $\lambda^{\alpha}$, we have
\begin{displaymath}
 0 = \nabla_x \bar{J}^{\alpha}(v, x^{\alpha}(v)) + 
  \lambda^{\alpha \top} \nabla_v g^{\alpha}(v, x^{\alpha}(v)) \;,
\end{displaymath}
which is the same as~\eqref{eq:Opt_2}. Finally, note that~\eqref{eq:Opt_3} is a restatement
of the constraints.
\end{proof}

\subsection{Extended Riccati equation, complementary slackness} \label{sec:main}

Throughout the rest of this section we will work simultaneously with all $\alpha$-realizations. In order to
make the presentation clearer, we introduce the following extended matrices:
\begin{align*}
        \mathbf{G}(\mu) &= \Diag \{ \mu_1 G, \ldots , \mu_{|\ACal|} G \}
         \in \mathbb{R}^{n |\ACal| \times n |\ACal|} \\
    \mathbf{\Pi}_k(\mu) &= \Diag \left\lbrace \mu_1 \Pi_k^{1} , \ldots , 
     \mu_{|\ACal|} \Pi_k^{|\ACal|} \right\rbrace \in \mathbb{R}^{n |\ACal| \times n |\ACal|} \\
 \mathbf{\Theta}_k(\mu) &= 
  \begin{bmatrix} 
   \mu_1 \Theta^{1}_k & \cdots & \mu_{|\ACal|} \Theta^{|\ACal|}_k 
  \end{bmatrix} \in \mathbb{R}^{m \times n |\ACal|} \\
   \mathbf{\Psi}_k(\mu) &= \sum_{\alpha = 1}^{|\ACal|} \mu_{\alpha} \Psi^{\alpha}_k \in \mathbb{R}^{m \times m} \\
        \mathbf{\Phi}_k &= \Diag \{ \Phi^{1}_k, \ldots, \Phi^{|\ACal|}_k  \} \in \mathbb{R}^{n |\ACal| \times n |\ACal|} \\
      \mathbf{\Gamma}_k &= 
        \begin{bmatrix} 
         \Gamma^{1 \top}_k & \Gamma^{2 \top}_k & \cdots & \Gamma^{|\ACal| \top}_k
        \end{bmatrix}^{\top} 
        \in \mathbb{R}^{n |\ACal| \times m} \\
      \mathbf{M}(\mu^*) &= \Diag \left( \mu_1^* I_n,\ldots,\mu_{|\ACal|}^*I_n \right) \;,
\end{align*}
where $|\ACal|$ is the cardinality of $\ACal$. Note that the vector $\mu$ (formally defined
below as an element of a simplex) only intervenes in the matrices that are related to the cost.
Also, note that the symmetry and positive (semi) definiteness
of $G$, $\Pi_\alpha^k$ and $\Psi_\alpha^k$ are inherited by 
$\mathbf{G}(\mu)$, $\mathbf{\Pi}_k(\mu)$ and $\mathbf{\Psi}_k(\mu)$,
for all $k \in 0,\ldots,N-1$ and for all $\mu$ in the simplex.

Let us now define the extended vectors
\begin{displaymath}
 \mathbf{x}_k = \begin{bmatrix} x^{1}_k \\ \vdots \\ x^{|\ACal|}_k \end{bmatrix} \in \mathbb{R}^{n |\ACal|} \quad \text{and} \quad 
  \mathbf{\Lambda}_k(\mu^{*}) = \begin{bmatrix} \mu^{*}_1 \lambda_k^{1} \\ \vdots \\ \mu^{*}_{|\ACal|} \lambda_k^{|\ACal|} \end{bmatrix} \in \mathbb{R}^{n |\ACal|} \;,
\end{displaymath}
so that we can formulate the main contribution of this work. The result can be interpreted as a 
discrete-time version of the robust maximum principle developed by one of the authors in~\cite{Boltyanski2011}
and applied to the LQ problem.

\begin{theorem}\label{th:Contribution}
Consider the multi-model linear system~\eqref{eq:ContinuousSys} and the cost functional~\eqref{eq:ContinuousCost} subject to
the usual positiveness assumptions~\eqref{eq:positive}. Let $\delta$ be a switching sequence given a priori and of the form~\eqref{eq:delta}
and let $\mathcal{U}_{\mathrm{ad}}^\delta$ be the set of admissible controllers defined by~\eqref{eq:UAdms}. The control
$u^{**}(\cdot) \in \mathcal{U}_{\mathrm{ad}}^\delta$ that solves the min-max problem~\eqref{eq:minMaxProblem} is given by:
\begin{equation} \label{eq:th:OptimalControl}
\begin{split}
 u^{**}(t) & := u^{*}(t,\mu^*) = \sum_{k = 0}^{N-1} \chi_{[t_k, t_{k+1})} (t) v_k^*(\mu^*) \;, \quad t_k \in \delta \;, \\
  v_k^*(\mu) &= - \left( \mathbf{\Psi}_k(\mu) + \mathbf{\Gamma}_k^{\top} \mathbf{P}_{k+1}(\mu) 
   \mathbf{\Gamma}_k \right)^{-1} \left( \mathbf{\Theta}_k(\mu) + \mathbf{\Gamma}_k^{\top}
   \mathbf{P}_{k+1}(\mu) \mathbf{\Phi}_k \right) \mathbf{x}_k,
\end{split}
\end{equation}
where the boldface matrices were defined previously, except for $\mathbf{P}_{k}(\mu)$, which is defined implicitly 
as the positive-definite solution of the discrete-time Riccati equation
\begin{subequations} \label{eq:th:Riccati}
\begin{multline}
 \mathbf{P}_k(\mu) = \mathbf{\Pi}_k(\mu) + \mathbf{\Phi}_k^{\top} \mathbf{P}_{k+1}(\mu) \mathbf{\Phi}_k - 
  \left(  \mathbf{\Theta}_k(\mu) + \mathbf{\Gamma}_k^{\top} \mathbf{P}_{k+1}(\mu) \mathbf{\Phi}_k \right)^{\top} \times \\
  \left( \mathbf{\Psi}_k(\mu) + \mathbf{\Gamma}_k^{\top} \mathbf{P}_{k+1}(\mu) \mathbf{\Gamma}_k \right)^{-1} 
  \left(  \mathbf{\Theta}_k(\mu) + \mathbf{\Gamma}_k^{\top} \mathbf{P}_{k+1}(\mu) \mathbf{\Phi}_k \right)
\end{multline}
with boundary condition
\begin{equation} 
 \mathbf{P}_N(\mu) = \mathbf{G}(\mu) \;.
\end{equation}
\end{subequations}
The optimal vector $\mu^{*}$ is the solution of
\begin{equation} \label{eq:th:MaxProblem}
 \max_{\mu \in \SCal^{|\ACal|}} \dfrac{1}{2} \mathbf{x}_0^{\top} \mathbf{P}_0(\mu) \mathbf{x}_0
\end{equation} 
and $\SCal^{|\ACal|}$ is the simplex
of dimension $|\ACal|-1$,
\begin{displaymath}
 \SCal^{|\ACal|} = \left\lbrace \mu \in \mathbb{R}^{|\ACal|} : 
  \sum_{\alpha \in \ACal} \mu_{\alpha} = 1, \; \mu_{\alpha} \geq 0, \; \text{for all } \alpha \in \ACal \right\rbrace \;.
\end{displaymath}

Moreover, the complementary slackness condition
\begin{equation}\label{eq:th:SlacknessCondition}
 \mu^{*}_{\alpha} \left[ J^{\alpha}(u^{**}) - J(u^{**}) \right] = 0
\end{equation}
is satisfied for every $\alpha \in \ACal$.  
\end{theorem}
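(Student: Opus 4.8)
The plan is to turn the abstract optimality conditions of Lemma~\ref{lemma:MultiLagrange} into the explicit Riccati form, and then to identify the resulting value with the optimum of a concave program over $\SCal^{|\ACal|}$. First I would settle existence and sufficiency: since $\Psi_k^\alpha>0$, each $W^\alpha$ in~\eqref{eq:W} is a coercive convex quadratic, so $W=\max_\alpha W^\alpha$ is continuous, convex and coercive on $\RE^{mN}$ and attains its minimum at some $v^*$, and convexity makes $0\in\partial_c W(v^*)$ both necessary and sufficient. The membership~\eqref{eq:Opt_1} then furnishes a vector $\mu^*\in\SCal^{|\ACal|}$ with $\mu^*_\alpha=0$ for $\alpha\notin I(v^*)$ and $\sum_\alpha\mu^*_\alpha\nabla_v L^\alpha(v^*,x^*,\lambda^\alpha)=0$; this is the $\mu^*$ of the statement. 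Reading~\eqref{eq:Opt_3} as the dynamics $x^\alpha_{k+1}=\Phi^\alpha_k x^\alpha_k+\Gamma^\alpha_k v_k$ and~\eqref{eq:Opt_2} as the adjoint recursion $\lambda^\alpha_{k-1}=\Pi^\alpha_k x^\alpha_k+\Theta^{\alpha\top}_k v_k+\Phi^{\alpha\top}_k\lambda^\alpha_k$ with terminal value $\lambda^\alpha_{N-1}=Gx^\alpha_N$, I would scale each $\alpha$-equation by $\mu^*_\alpha$ and stack into the boldface coordinates, so that the adjoint recursion reads $\mathbf{\Lambda}_{k-1}(\mu^*)=\mathbf{\Pi}_k(\mu^*)\mathbf{x}_k+\mathbf{\Theta}_k(\mu^*)^{\top}v_k+\mathbf{\Phi}_k^{\top}\mathbf{\Lambda}_k(\mu^*)$ and the weighted stationarity~\eqref{eq:Opt_1} reads $\mathbf{\Theta}_k(\mu^*)\mathbf{x}_k+\mathbf{\Psi}_k(\mu^*)v_k+\mathbf{\Gamma}_k^{\top}\mathbf{\Lambda}_k(\mu^*)=0$.

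The core computation is a backward induction on $k$ with the ansatz $\mathbf{\Lambda}_k(\mu^*)=\mathbf{P}_{k+1}(\mu^*)\mathbf{x}_{k+1}$, whose base case is the terminal relation $\mathbf{\Lambda}_{N-1}(\mu^*)=\mathbf{G}(\mu^*)\mathbf{x}_N=\mathbf{P}_N(\mu^*)\mathbf{x}_N$. Assuming the ansatz at stage $k$, I substitute it together with $\mathbf{x}_{k+1}=\mathbf{\Phi}_k\mathbf{x}_k+\mathbf{\Gamma}_k v_k$ into the stationarity equation and solve for $v_k$; the matrix $\mathbf{\Psi}_k(\mu^*)+\mathbf{\Gamma}_k^{\top}\mathbf{P}_{k+1}(\mu^*)\mathbf{\Gamma}_k$ is invertible because $\mathbf{\Psi}_k>0$ and $\mathbf{P}_{k+1}(\mu^*)\ge 0$, and the outcome is exactly the feedback~\eqref{eq:th:OptimalControl}. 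Substituting this $v_k$ back into the stacked adjoint recursion and collecting the coefficient of $\mathbf{x}_k$ shows that $\mathbf{\Lambda}_{k-1}(\mu^*)=\mathbf{P}_k(\mu^*)\mathbf{x}_k$ precisely when $\mathbf{P}_k(\mu^*)$ obeys the Riccati recursion~\eqref{eq:th:Riccati}; positive (semi)definiteness of $\mathbf{P}_k(\mu^*)$, which in particular secures the invertibility just used, follows from the standard completion-of-squares argument for the LQ Riccati map. This closes the induction and establishes~\eqref{eq:th:Riccati}--\eqref{eq:th:OptimalControl}.

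To characterize $\mu^*$ through~\eqref{eq:th:MaxProblem}, I would observe that the same completion-of-squares identity shows, for each fixed $\mu$, that $\tfrac12\mathbf{x}_0^{\top}\mathbf{P}_0(\mu)\mathbf{x}_0$ is the minimum over $v$ of the scalarized cost $\sum_\alpha\mu_\alpha W^\alpha(v)$, which in extended coordinates is the stacked LQ cost with data $\mathbf{G}(\mu),\mathbf{\Pi}_k(\mu),\mathbf{\Theta}_k(\mu),\mathbf{\Psi}_k(\mu)$. Denoting this value $F(\mu)$, it is concave on $\SCal^{|\ACal|}$ as a pointwise minimum of maps linear in $\mu$. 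The decisive step is the minimax identity $\min_v\max_\alpha W^\alpha(v)=\min_v\max_{\mu\in\SCal^{|\ACal|}}\sum_\alpha\mu_\alpha W^\alpha(v)=\max_{\mu\in\SCal^{|\ACal|}}\min_v\sum_\alpha\mu_\alpha W^\alpha(v)=\max_\mu F(\mu)$: the first equality holds because a linear form on the simplex is maximized at a vertex, and the middle (saddle) equality is a standard minimax theorem, applicable because the scalarized cost is convex in $v$ and concave (indeed linear) in $\mu$ over the compact convex simplex. Since $\mu^*$ is supported on $I(v^*)$ and $v^*$ also minimizes $\sum_\alpha\mu^*_\alpha W^\alpha$ (its gradient vanishes there by the construction of $\mu^*$), I get $F(\mu^*)=\sum_\alpha\mu^*_\alpha W^\alpha(v^*)=W(v^*)=\min_v W$, so $\mu^*$ attains the maximum in~\eqref{eq:th:MaxProblem}.

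Finally, the complementary slackness~\eqref{eq:th:SlacknessCondition} is immediate from the support of $\mu^*$: for $\alpha\notin I(v^*)$ one has $\mu^*_\alpha=0$, whereas for $\alpha\in I(v^*)$ one has $J^\alpha(u^{**})=W^\alpha(v^*)=W(v^*)=J(u^{**})$, so each product $\mu^*_\alpha[J^\alpha(u^{**})-J(u^{**})]$ vanishes. I expect the minimax exchange of the third step to be the main obstacle: the induction of the second step is a mechanical, if lengthy, rewriting of Lemma~\ref{lemma:MultiLagrange} in the extended coordinates, but equating $\min_v\max_\alpha$ with $\max_\mu\min_v$---thereby legitimizing the finite-dimensional search for $\mu^*$ over $\SCal^{|\ACal|}$ together with the identification $F(\mu)=\tfrac12\mathbf{x}_0^{\top}\mathbf{P}_0(\mu)\mathbf{x}_0$---is where the convexity hypotheses and the compactness of the simplex must be invoked with care.
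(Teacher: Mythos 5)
Your proof is correct, and its first half---stacking the optimality conditions of Lemma~\ref{lemma:MultiLagrange} into the extended coordinates, positing the affine ansatz $\mathbf{\Lambda}_k(\mu^*)=\mathbf{P}_{k+1}(\mu^*)\mathbf{x}_{k+1}$, and running the backward induction to obtain \eqref{eq:th:OptimalControl} and \eqref{eq:th:Riccati}---is essentially the paper's own argument (the paper merely cites the classical discrete-time LQ derivation rather than spelling out the induction; your index convention for the multipliers is shifted by one relative to the paper's, but is internally consistent). Where you genuinely diverge is in establishing \eqref{eq:th:MaxProblem} and the slackness condition \eqref{eq:th:SlacknessCondition}. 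The paper identifies $\tfrac12\mathbf{x}_0^{\top}\mathbf{P}_0(\mu^*)\mathbf{x}_0$ with $\sum_{\alpha}\mu^*_{\alpha}\bar{J}^{\alpha}$ via Remark~\ref{rem:convCost} and then invokes Lemma~\ref{lemma:EqProblems} of the Appendix, a linear program over the simplex with \emph{fixed} data $z^{\alpha}$; that application is somewhat informal, since in \eqref{eq:muDef} the quantities $\bar{J}^{\alpha}(v^{*}(\mu),x^{\alpha*})$ themselves depend on $\mu$. You instead define $F(\mu)=\min_{v}\sum_{\alpha}\mu_{\alpha}W^{\alpha}(v)=\tfrac12\mathbf{x}_0^{\top}\mathbf{P}_0(\mu)\mathbf{x}_0$ (well posed for every $\mu\in\SCal^{|\ACal|}$ because $\mathbf{\Psi}_k(\mu)\geq\varepsilon(t_{k+1}-t_k)I>0$), note its concavity as an infimum of affine functions of $\mu$, and close the loop with a saddle-point chain; this route is cleaner and additionally supplies two facts the paper leaves implicit, namely existence of the minimizer (coercivity from $\Psi_k^{\alpha}>0$) and sufficiency of the first-order conditions (convexity of $W$). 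One simplification worth noting: you single out the minimax exchange as the delicate step, but you never actually need a minimax theorem. Weak duality $\max_{\mu}\min_{v}\leq\min_{v}\max_{\mu}$ holds trivially, and your own computation $F(\mu^*)=\sum_{\alpha}\mu^*_{\alpha}W^{\alpha}(v^*)=W(v^*)=\min_{v}W$ (valid because $\sum_{\alpha}\mu^*_{\alpha}\nabla W^{\alpha}(v^*)=0$ with the scalarized cost convex, and because $\mu^*$ is supported on $I(v^*)$) already sandwiches $F(\mu^*)$ between $\max_{\mu}F(\mu)$ and $\min_{v}W\geq\max_{\mu}F(\mu)$, which forces $F(\mu^*)=\max_{\mu}F(\mu)$ and is all that \eqref{eq:th:MaxProblem} requires.
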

\begin{remark} \label{rem:convCost}
Equations~\eqref{eq:th:OptimalControl} and~\eqref{eq:th:Riccati} define the solution of a classical
discrete-time LQ problem for the extended system $\mathbf{x}_{k+1} = \mathbf{\Phi}_k \mathbf{x}_k + \mathbf{\Gamma}_k v_k$,
the only difference being the dependence that the matrices have on $\mu$. In view of this observation
and the structure of the extended cost matrices, one concludes that the cost for the extended system is a convex combination
of the individual costs. In symbols,
\begin{multline*}
 \dfrac{1}{2} \mathbf{x}_N^{\top} \mathbf{G}(\mu) \mathbf{x}_N + 
  \dfrac{1}{2} \sum_{k = 0}^{N-1} \left( \mathbf{x}_k^{\top} 
  \mathbf{\Pi}_k(\mu) \mathbf{x}_k +
  2 \mathbf{x}_k \mathbf{\Theta}_k(\mu)^{\top} v_k + 
  v_k^{\top} \mathbf{\Psi}_k(\mu) v_k \right) \\ = \sum_{\alpha \in \ACal} \mu_{\alpha} \bar{J}^{\alpha}(v, x^{\alpha}) \;.
\end{multline*}
\end{remark}
\begin{remark} \label{remark:3}
The complementarity slackness condition reveals that only the \emph{extreme} plants
(those for which the maximum is attained) play a role in the computation of the optimal
control. Indeed, the optimal vector $\mu^{*}$ acts as an indicator for the set of 
extreme plants (having nonzero elements in the $\alpha$-coordinates if, and only if, the
corresponding plant is extreme).
\end{remark}
%
%
\begin{proof}[Proof of Theorem~\ref{th:Contribution}]
It has already been established that the continuous-time 
problem~\eqref{eq:minMaxProblem} is equivalent to the
discrete-time problem \eqref{eq:minMaxDisc}. 
Thus, our problem consists in minimizing $\bar{J}(v, x)$.
Lemma \ref{lemma:MultiLagrange} states that the optimal control pair must satisfy~\eqref{eq:Opt}, which translates to
\begin{subequations} \label{eq:optimConditions1}
\begin{equation}
 0 = \sum_{\alpha \in I(v^{*})} \left\lbrace \mu_{\alpha}^{*} 
  \left( \nabla_{v} \bar{J}^{\alpha}(v^{*}, x^{*}) + \lambda^{\alpha \top} \nabla_v g^{\alpha}(v^{*}, x^{*}) \right) \right\rbrace 
\end{equation}
for some $\mu^{*} \in \SCal^{|I(v^{*})|}$ (i.e, such that $\sum_{\alpha \in I(v^{*})} \mu_{\alpha}^{*} = 1$ and $\mu^{*}_{\alpha} \geq 0$) and
\begin{align}
 0 &= \nabla_{x} \bar{J}^{\alpha}(v^{*}, x^{*}) + 
 \lambda^{\alpha \top} \nabla_x g^{\alpha}(v^{*}, x^{*}) \\
 0 &= g^{\alpha}(v^{*}, x^{*})
\end{align}
\end{subequations}
for all $\alpha \in I(v^*)$. Allow us to embed $\mu^*$ in the larger simplex $\SCal^{|\ACal|}$ by putting zeros in the
$\alpha \notin I(v^{*})$-entries, so the new vector $\mu^{*} \in \SCal^{|\ACal|}$ acts an indicator for the \textit{extreme} plants.

Computing~\eqref{eq:optimConditions1} explicitly gives
\begin{equation} \label{eq:optimConditions2}
\begin{split}
 0 &= \sum_{\alpha \in \ACal} \mu^{*}_{\alpha} 
 \left( \Psi_k^{\alpha} v_k^{*}+ \Theta_k^{\alpha} x_k^{\alpha *} + 
 \Gamma_k^{\alpha \top} \lambda_{k+1}^{\alpha} \right) \\
 0 &= \mu_{\alpha}^{*} \left[ \Pi_k^{\alpha} x_k^{\alpha *} + 
 \Theta_k^{\alpha \top} v_k^{*} + \Phi_{k}^{\top} \lambda_{k+1}^{\alpha} - 
 \lambda_k^{\alpha} \right] \\
 0 &= \mu_{\alpha}^{*} \left[\lambda_N - G x_N^{\alpha *} \right] \\
 0 &= \mu_{\alpha}^{*} \left[ -x_{k+1}^{\alpha *} + 
 \Phi_k^{\alpha} x_k^{\alpha *} + \Gamma_k^{\alpha} v_k^{*} \right]
\end{split}
\end{equation}
for all $k = 0,\ldots,N-1$. In order to make the notation more compact we will make
use of the block matrices defined above. The optimality conditions~\eqref{eq:optimConditions2}
can now be rewritten as
\begin{subequations} \label{eq:extLagr}
\begin{align}
 0 &= \mathbf{\Psi}_{k}(\mu^{*}) v_k^{*} + \mathbf{\Theta}_k(\mu^{*}) \mathbf{x}_k^{*}  + 
  \mathbf{\Gamma}_k^{\top} \mathbf{\Lambda}_{k+1}(\mu^{*})  \\ 
 0 &= \mathbf{\Pi}_k(\mu^{*}) \mathbf{x}_k^{*} + \mathbf{\Theta}_k(\mu^{*})^{\top} v_k^{*} + 
  \mathbf{\Phi}_k^{\top} \mathbf{\Lambda}_{k+1}(\mu^{*}) - \mathbf{\Lambda}_k(\mu^{*}) \\
 0 &= \mathbf{G}(\mu^{*}) \mathbf{x}_N^{*} - \mathbf{\Lambda}_N(\mu^{*}) \\
 0 &= \mathbf{M}(\mu^*)[-\mathbf{x}_{k+1}^{*} + \mathbf{\Phi}_k \mathbf{x}_k^{*} + \mathbf{\Gamma}_k v_k^{*}]
\end{align}
\end{subequations}
for all $k = 0, \ldots , N-1$. Note that these equations are similar to the classical discrete-time LQ
optimality equations of an $n |\ACal|$-dimensional linear system parametrized by $\mu^{*}$ \cite[p.~582]{Ogata1995}.
To obtain the optimal control $v^{*}$, we simply follow the classical discrete-time approach
(only the main steps are reported since the approach is well known).

Recall that it is always possible to write the adjoint variable $\mathbf{\Lambda}(\mu^*)$ as a linear
function of the (extended) state
$\mathbf{x}^{*}$, i.e., as $\mathbf{\Lambda}_k(\mu^{*}) = \mathbf{P}_k(\mu^{*}) \mathbf{x}_k^{*}$.
It is straightforward to verify that, for $\mathbf{\Lambda}_k(\mu^{*})$ to satisfy~\eqref{eq:extLagr},
$\mathbf{P}_k(\mu^{*})$ must be the (positive semi-definite) solution of the
$\mu^{*}$-parametrized Riccati equation~\eqref{eq:th:Riccati}. The optimal control then takes the form
\begin{displaymath}
 v_k^{*}(\mu^{*}) = - \left( \mathbf{\Psi}_k(\mu^{*}) + 
 \mathbf{\Gamma}_k^{\top} \mathbf{P}_{k+1}(\mu^{*}) \mathbf{\Gamma}_k \right)^{-1} 
  \left( \mathbf{\Theta}_k(\mu^{*}) + \mathbf{\Gamma}_k^{\top} 
  \mathbf{P}_{k+1}(\mu^{*}) \mathbf{\Phi}_k \right) \mathbf{x}_k^{*}
\end{displaymath}
(see~\cite[ch.~8]{Ogata1995} for a detailed development of the classical discrete-time LQ problem).

It is not difficult to prove that the optimal cost of the extended system is equal to
$\frac{1}{2} \mathbf{x}_0^{\top} \mathbf{P}_k(\mu^{*}) \mathbf{x}_0$ (see~\cite[p. 575]{Ogata1995}) so,
according to Remark~\ref{rem:convCost},
\begin{equation} \label{eq:optCost}
 \sum_{\alpha \in \ACal} \mu_{\alpha}^* \bar{J}^{\alpha}(v^{*}(\mu^*), x^{\alpha *}) = 
  \frac{1}{2} \mathbf{x}_0^{\top} \mathbf{P}_k(\mu^{*}) \mathbf{x}_0 \;.
\end{equation}
On the other hand, according to Lemma~\ref{lemma:EqProblems} in the Appendix, this vector also satisfies
\begin{equation} \label{eq:muDef}
 \mu^{*} = \Argmax_{\mu \in \SCal^{|\ACal|}} \sum_{\alpha \in \ACal} 
  \mu_{\alpha} \bar{J}^{\alpha}(v^{*}(\mu), x^{\alpha *}) \;.
\end{equation}
Statement~\eqref{eq:th:MaxProblem} now follows directly from~\eqref{eq:optCost} and~\eqref{eq:muDef}.

Lemma~\ref{lemma:EqProblems} also implies that the components of $\mu^{*}$ will be different 
from zero at the $\alpha$-positions where $J^{\alpha}(u^{**}) = J(u^{**})$
(i.e., where the maximum is attained) and will be zero otherwise. This fact is equivalent to the
complementary slackness condition~\eqref{eq:th:SlacknessCondition}.
\end{proof}

\begin{remark} \label{remark:4}
Notice from~\eqref{eq:th:MaxProblem} that $\mu^{*}$ depends on the initial conditions and
the system parameters only (not on the whole state trajectory).
This allow us to separate the optimization problem in two simpler subproblems. Namely, the first
part consists in solving the $\mu$-parametrized Riccati equation~\eqref{eq:th:Riccati}. 
The second part consists in finding the solution $\mu^{*}$ of~\eqref{eq:th:MaxProblem}. 
Both stages can be accomplished off-line. 
\end{remark}

\subsection{Numerical Algorithm} \label{sec:numAlg}

Theorem~\ref{th:Contribution} provides the feedback control equations
in terms of the parameters of every $\alpha$-model and $\mu^{*}$. 
Thus, in order to determine the control law completely, it is necessary
to solve~\eqref{eq:th:MaxProblem}.
Gradient-based algorithms are widely used for numerical optimization. These are methods
where the search directions are defined by the gradient of a target function at the
current iteration point. Computing the gradient of the performance index
$\frac{1}{2} \mathbf{x}_0^{\top} \mathbf{P}_0(\mu) \mathbf{x}_0$ directly 
is a challenging task, mainly because of the recursion in the Riccati 
equation~\eqref{eq:th:Riccati}. To circumvent this problem we propose 
to use an approximation of the gradient in combination with a projection 
algorithm that guarantees that $\mu$ belongs to $\SCal^{|\ACal|}$.

Let $f(\mu) \in \mathbb{R}$ be a convex cost function to be minimized. The classical gradient with projection algorithm for finding the
minimum is given by the recursion~\cite{Polyak1965}
\begin{equation} \label{eq:muClassical}
 \mu^{j} = \operatorname{Proj} \left[ \mu^{j-1} - \gamma_j \nabla f(\mu^{j-1}) \right]_{\SCal^{|\ACal|}} \;,
\end{equation}
where $\operatorname{Proj}\left[ \cdot \right]_{\SCal^{|\ACal|}}$ refers to the 
projection of a point in $\mathbb{R}^{|\ACal|}$ into the set $\SCal^{|\ACal|}$, i.e.,
\begin{displaymath}
 x^{*} = \operatorname{Proj} \left[ y \right]_{\mathcal{B}} \text{ if and only if } x^{*} = \Argmin_{x \in \mathcal{B}} \| y - x \|
\end{displaymath}
and $\gamma_j$ is a positive and small number called the 
step-size\footnote{Actually, there is a large variety of 
possibilities for choosing the step-size $\gamma_{j}$. See, 
e.g.~\cite{Bertsekas1999}.},  	
The convexity of $f(\cdot)$ implies that $\lim_{j \rightarrow 
\infty} f(\mu^j) = f(\mu^*)$, where $\mu^*$ is the minimum of 
$f$ constrained to $\mu^* \in \SCal^{|\ACal|}$ (see \cite{Polyak1965}). 
Equivalently, for all $\varepsilon_1 > 0$, there exists an 
$N_{\varepsilon_1} > 0$ such that
\begin{displaymath}
 | f(\mu^j) - f(\mu^*)| < \varepsilon_1
\end{displaymath}
for all $j \geq N_{\varepsilon_1}$. As mentioned above, we will approximate $\nabla f(\mu^{j-1})$ in~\eqref{eq:muClassical}. The chosen
approximation is the one used in the classical Kiefer-Wolfowitz procedure~\cite{Poznyak2009}, 
\begin{equation} \label{eq:GradientApprox}
  {Y_m(\mu^{j-1})} = \dfrac{1}{2 \beta_m} \sum_{i = 1}^{|\ACal|} \left[ f(\mu^{j-1} + \beta_m e_i) - f(\mu^{j-1} - \beta_m e_i) \right] e_i \in \mathbb{R}^{|\ACal|} \;,
\end{equation}
where the sequence $\beta_m$ vanishes as $m \to \infty$ and the vector $e_i$ represents the $i$-th element of the
canonical basis in $\mathbb{R}^{|\ACal|}$.
\begin{lemma} \label{prop:gradient}
 Consider the approximation~\eqref{eq:GradientApprox}. If the gradient $f(\cdot)$ exists at $\mu^{j-1}$ and $\lim_{m \rightarrow \infty} \beta_m = 0$,
 then the limit of $Y_m(\mu^{j-1})$ as $m \to \infty$ is equal to the gradient of $f(\cdot)$ at $\mu^{j-1}$.
\end{lemma}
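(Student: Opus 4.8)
The plan is to argue componentwise. Writing $Y_m(\mu^{j-1}) = \sum_{i=1}^{|\ACal|} \left[ Y_m(\mu^{j-1}) \right]_i e_i$ as in~\eqref{eq:GradientApprox}, each coordinate is nothing but the symmetric (central) difference quotient of $f$ along the corresponding canonical direction,
\[
 \left[ Y_m(\mu^{j-1}) \right]_i = \frac{f(\mu^{j-1} + \beta_m e_i) - f(\mu^{j-1} - \beta_m e_i)}{2 \beta_m} \;.
\]
Since $Y_m(\mu^{j-1})$ has a fixed, finite number $|\ACal|$ of components, it suffices to show that each coordinate converges to the corresponding partial derivative $\partial f(\mu^{j-1}) / \partial \mu_i$, which is precisely the $i$-th entry of $\nabla f(\mu^{j-1})$; convergence in $\RE^{|\ACal|}$ then follows from componentwise convergence.

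First I would rewrite the central quotient as the average of a forward and a backward difference quotient,
\[
 \left[ Y_m(\mu^{j-1}) \right]_i = \frac{1}{2} \left( \frac{f(\mu^{j-1} + \beta_m e_i) - f(\mu^{j-1})}{\beta_m} + \frac{f(\mu^{j-1}) - f(\mu^{j-1} - \beta_m e_i)}{\beta_m} \right) \;.
\]
By hypothesis the gradient of $f$ exists at $\mu^{j-1}$, so in particular the partial derivative $\partial f(\mu^{j-1}) / \partial \mu_i$ exists; by its very definition both one-sided quotients above tend to $\partial f(\mu^{j-1}) / \partial \mu_i$ as $\beta_m \to 0$. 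Letting $m \to \infty$ and invoking $\lim_{m \to \infty} \beta_m = 0$, each coordinate converges to $\partial f(\mu^{j-1}) / \partial \mu_i$, whence $\lim_{m \to \infty} Y_m(\mu^{j-1}) = \nabla f(\mu^{j-1})$. An equivalent and perhaps cleaner route, if one prefers a Taylor-type estimate, is to use the first-order expansion $f(\mu^{j-1} \pm \beta_m e_i) = f(\mu^{j-1}) \pm \beta_m \, \partial f(\mu^{j-1})/\partial \mu_i + o(\beta_m)$ granted by differentiability, substitute into the quotient, observe that the zeroth-order terms cancel while the first-order terms add, and note that the remainder is $o(\beta_m)/\beta_m \to 0$.

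I do not expect a serious obstacle here: the statement is essentially a direct consequence of the definition of the partial derivatives, and the argument above is already almost complete. The only two points that deserve a word of care are, first, that the perturbations in~\eqref{eq:GradientApprox} are taken exclusively along the canonical basis vectors $e_i$, so that mere existence of the partial derivatives (a fortiori, existence of the full gradient, as assumed) suffices and no regularity of $f$ in off-axis directions is required; and second, that the finiteness of $|\ACal|$ is what permits the passage from coordinatewise convergence to convergence of the vector $Y_m(\mu^{j-1})$ itself.
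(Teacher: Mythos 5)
Your proposal is correct and matches the paper's argument in substance: the paper writes the first-order expansions $f(\mu^{j-1} \pm \beta_m e_i) = f(\mu^{j-1}) \pm \langle \nabla f(\mu^{j-1}), \beta_m e_i\rangle + o(\beta_m)$, cancels the zeroth-order terms, and lets the $o(\beta_m)/\beta_m$ remainder vanish --- exactly the ``equivalent route'' you describe at the end. Your primary phrasing via the average of forward and backward difference quotients is just a restatement of the same computation, with the minor (and correct) added observation that existence of the axis-direction partial derivatives already suffices.
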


\begin{proof}
The proof is in \cite{Poznyak2009} but we briefly repeat it here for completeness.
Since $f$ is differentiable at $\mu^{j-1}$, we can write the first order approximation of $f(\mu^{j-1} + \beta_m e_i)$ at $\mu^{j-1}$ as
\begin{displaymath}
 f(\mu^{j-1} + \beta_m e_i) = f (\mu^{j-1}) + \left\langle \nabla f(\mu^{j-1}), \beta_m e_i \right\rangle + o(\beta_m) \;.
\end{displaymath}
where $o(\cdot)$ satisfies
\begin{displaymath}
 \lim_{x \rightarrow 0} \bigg | \dfrac{o(x)}{x} \bigg| = 0 \;.
\end{displaymath}
Representing $f(\mu^{j-1} - \beta_m e_i)$ in the corresponding fashion gives
\begin{align*}
 Y_m(\mu^{j-1}) &= \dfrac{1}{\beta_m} \sum_{i = 1}^{|\ACal|} \left[ \langle \nabla f(\mu^{j-1}), \beta_m e_i \rangle +  o(\beta_m) \right]e_i \\
                &= \sum_{i = 1}^{|\ACal|} \left[ \dfrac{\partial f(\mu^{j-1})}{\partial \mu^{j-1}_i}  +  \dfrac{o(\beta_m)}{\beta_m} \right]e_i \\
                &= \nabla f(\mu^{j-1}) + \dfrac{o(\beta_m)}{\beta_m} \sum_{i = 1}^{|\ACal|} e_i \;.
\end{align*}
Taking the limit as $m \rightarrow \infty$ gives the desired result.
\end{proof}

With the convergence of the recursion~\eqref{eq:GradientApprox} assured, it seems natural to adjust the classical gradient
with projection algorithm as follows.

\begin{proposition}
Consider the recursion
\begin{equation} \label{eq:ProjAlgModified}
 \mu^j = \operatorname{Proj} \left[ \mu^{j-1} - \gamma_j Y_j (\mu^{j-1})\right]_{\SCal^{|\ACal|}}
\end{equation}
with $f(\cdot)$ a convex (and therefore continuous) function. Then, $\lim_{j \rightarrow \infty} f(\mu^j) = f(\mu^{*})$.
\end{proposition}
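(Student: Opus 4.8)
The plan is to recognize the recursion \eqref{eq:ProjAlgModified} as a vanishing perturbation of the exact gradient-projection method \eqref{eq:muClassical}, whose convergence is already guaranteed by \cite{Polyak1965}, and then show that the perturbation does not alter the limit. First I would use Lemma~\ref{prop:gradient} to write $Y_j(\mu^{j-1}) = \nabla f(\mu^{j-1}) + \eta_j$, where $\eta_j := Y_j(\mu^{j-1}) - \nabla f(\mu^{j-1})$ is the approximation error. Since the update at step $j$ uses the approximation of order $m = j$ with $\beta_j \to 0$, the computation in the proof of Lemma~\ref{prop:gradient} gives $\eta_j = \frac{o(\beta_j)}{\beta_j}\sum_{i=1}^{|\ACal|} e_i$, so that $\|\eta_j\| \to 0$ as $j \to \infty$. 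Thus \eqref{eq:ProjAlgModified} is precisely the exact-gradient recursion contaminated by an additive error that vanishes along the iterations.

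The core estimate I would then establish is a Fej\'er-type (quasi-monotone) inequality. Let $\mu^{*} \in \SCal^{|\ACal|}$ be the minimizer of $f$ on the simplex. Since projection onto the convex set $\SCal^{|\ACal|}$ is nonexpansive and fixes $\mu^{*}$,
\begin{align*}
 \|\mu^j - \mu^{*}\|^2 &\leq \|\mu^{j-1} - \gamma_j Y_j(\mu^{j-1}) - \mu^{*}\|^2 \\
  &= \|\mu^{j-1} - \mu^{*}\|^2 - 2\gamma_j \langle Y_j(\mu^{j-1}), \mu^{j-1} - \mu^{*} \rangle + \gamma_j^2 \|Y_j(\mu^{j-1})\|^2 \;.
\end{align*}
Splitting $Y_j = \nabla f + \eta_j$ and using convexity, $\langle \nabla f(\mu^{j-1}), \mu^{j-1} - \mu^{*} \rangle \geq f(\mu^{j-1}) - f(\mu^{*}) \geq 0$, while the error contributes at most $\|\eta_j\|\,D$ with $D$ the diameter of the compact simplex. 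Since $\nabla f$ is bounded on the simplex and $\eta_j \to 0$, the norms $\|Y_j(\mu^{j-1})\|$ are uniformly bounded by some $L$. Summing over $j$ yields
\begin{displaymath}
 2\sum_{j=1}^{J} \gamma_j \left( f(\mu^{j-1}) - f(\mu^{*}) \right) \leq \|\mu^0 - \mu^{*}\|^2 + 2D \sum_{j=1}^{J} \gamma_j \|\eta_j\| + L^2 \sum_{j=1}^{J} \gamma_j^2 \;.
\end{displaymath}

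Under the standard step-size conditions that make the exact method \eqref{eq:muClassical} converge, $\sum_j \gamma_j = \infty$ and $\sum_j \gamma_j^2 < \infty$, the last term is finite and, because $\|\eta_j\| \to 0$, the middle term is $o\!\left(\sum_{j}\gamma_j\right)$ by a weighted Ces\`aro (Kronecker) argument. Dividing by $\sum_{j=1}^{J}\gamma_j$ and letting $J \to \infty$ forces $\liminf_{j} \left( f(\mu^{j-1}) - f(\mu^{*}) \right) = 0$; continuity of $f$ together with the quasi-Fej\'er monotonicity of $\|\mu^j - \mu^{*}\|$ upgrades this to $\lim_{j\to\infty} f(\mu^j) = f(\mu^{*})$. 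Equivalently, one may simply invoke \cite{Polyak1965} for the exact method and observe that the vanishing perturbation $\eta_j$ leaves the limit unchanged.

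I expect the main obstacle to be the control of the error term $\eta_j$ along the \emph{moving} iterates $\mu^{j-1}$. Lemma~\ref{prop:gradient} only asserts the pointwise convergence $Y_m(\mu) \to \nabla f(\mu)$ at a \emph{fixed} $\mu$, whereas in \eqref{eq:ProjAlgModified} the evaluation point varies with $j$; making $\|\eta_j\| \to 0$ rigorous therefore requires the remainder $o(\beta_j)/\beta_j$ to vanish \emph{uniformly} over the compact simplex. This is immediate if $f$ is twice continuously differentiable, in which case the remainder is $O(\beta_j)$ uniformly, but under mere convexity one must additionally guarantee that $f$ is differentiable at each iterate (convex functions are differentiable only almost everywhere). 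Reconciling the abstract convexity hypothesis of the statement with the differentiability required by the Kiefer--Wolfowitz approximation is the delicate point that the otherwise routine perturbation argument must address.
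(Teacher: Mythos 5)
Your proof is correct in substance but follows a genuinely different route from the paper's. The paper argues by a double-limit/diagonal construction: it writes the exact update \eqref{eq:muClassical} as $\mu^j = \lim_{m\to\infty}\operatorname{Proj}\bigl[\mu^{j-1}-\gamma_j Y_m(\mu^{j-1})\bigr]_{\SCal^{|\ACal|}}$ using continuity of the projection, defines $\mu^{j,m}$ as the inner iterate, combines the $\varepsilon_1$-bound from Polyak's convergence of the exact method with an $\varepsilon_2$-bound $|f(\mu^{j,m})-f(\mu^j)|<\varepsilon_2$, and then passes to the diagonal sequence $\mu^{j,j}=\mu^j$. You instead treat \eqref{eq:ProjAlgModified} directly as a perturbed gradient-projection scheme, decompose $Y_j=\nabla f+\eta_j$ with $\|\eta_j\|\to 0$, and run the standard quasi-Fej\'er analysis (nonexpansiveness of the projection, the convexity inequality, summation, and a Kronecker/Ces\`aro argument). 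Your route is the more robust one: it makes the cumulative effect of the approximation error explicit and avoids the paper's implicit uniformity-in-$j$ assumption in the diagonal step (the index $M_{\varepsilon_2}$ there depends on $j$, and the paper also tacitly identifies the exact-recursion iterates with the approximate-recursion iterates when invoking the $\varepsilon_1$-bound). The price is that you must assume the standard step-size conditions $\sum_j\gamma_j=\infty$, $\sum_j\gamma_j^2<\infty$, which the paper leaves to a footnote. The delicate point you flag --- that Lemma~\ref{prop:gradient} is pointwise at a fixed $\mu$ while the iterates move, so one needs the remainder $o(\beta_j)/\beta_j$ to vanish uniformly and $f$ to be differentiable at each iterate --- is a real gap that the paper's proof shares and does not address; identifying it is a point in your favor rather than a defect of your argument.
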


\begin{proof}
It follows from Lemma~\ref{prop:gradient} that $\nabla f(\mu^{j-1}) = \lim_{m \rightarrow \infty} Y_m(\mu^{j-1})$. By substituting
this expression in~\eqref{eq:muClassical} we obtain
\begin{multline} \label{eq:muApprox}
 \mu^j = \operatorname{Proj} \left[ \mu^{j-1} - \gamma_j \lim_{m \rightarrow \infty} Y_m (\mu^{j-1})\right]_{\SCal^{|\ACal|}}  
     \\ = \lim_{m \rightarrow \infty} \operatorname{Proj} \left[ \mu^{j-1} - \gamma_j Y_m (\mu^{j-1})\right]_{\SCal^{|\ACal|}} \;,
\end{multline}
where the second equation follows from the continuity of $\operatorname{Proj}[\cdot]_{\SCal^{|\ACal|}}$.
Let us define the term
\begin{displaymath}
 \mu^{j,m} := \operatorname{Proj} \left[ \mu^{j-1} - \gamma_j Y_m (\mu^{j-1})\right]_{\SCal^{|\ACal|}} \;.
\end{displaymath}
Then we can write~\eqref{eq:muApprox} as $\mu^j = \lim_{m \rightarrow \infty} \mu^{j,m}$ and $f(\mu^{j}) = \lim_{m \rightarrow \infty} f (\mu^{j,m})$,
where we have used again a continuity argument to compute the second limit. Thus, for all $\varepsilon_2 > 0$, there exists an $M_{\varepsilon_2} > 0$
such that
\begin{displaymath}
 | f(\mu^{j,m}) - f(\mu^j) | < \varepsilon_2 
\end{displaymath}
for all $m \geq M_{\varepsilon_2}$. Finally, for every $\varepsilon > 0$, we can choose $\varepsilon_1 > 0$ and  $\varepsilon_2 > 0$
such that $\varepsilon = \varepsilon_1 + \varepsilon_2$. It is then straightforward to verify that
\begin{align*}
 |f(\mu^{j,m}) - f(\mu^{*})| &= |f(\mu^{j,m}) - f(\mu^{*}) \pm f(\mu^j) | \\
                             &\leq | f(\mu^{j,m}) - f(\mu^j) | + |f(\mu^{j}) - f(\mu^{*}) | \\
                             &< \varepsilon_1 + \varepsilon_2 = \varepsilon  
\end{align*}
for all $j,m \geq \max \{ N_{\varepsilon_1}, M_{\varepsilon_2} \}$. 
Consequently, for the sequence $\mu^{j,j} = \mu^{j}$ we have
\begin{displaymath}
 |f(\mu^{j,j}) - f(\mu^{*})| < \varepsilon \quad \text{for all } j \geq 
  \max \{ N_{\varepsilon_1}, M_{\varepsilon_2} \} \;.
\end{displaymath}
The convergence of the recursion \eqref{eq:ProjAlgModified} has been
proved. 
\end{proof}

In order to apply the recursion~\eqref{eq:ProjAlgModified} to our 
optimization problem, it suffices to set $f(\mu) = - \mathbf{x}_0^{\top} 
\mathbf{P}_0(\mu) \mathbf{x_0}$ and specify a stop criterion. The latter can
be obtained from the complementary slackness condition~\eqref{eq:th:SlacknessCondition} 
in Theorem~\ref{th:Contribution}. More precisely: The fact that, at the optimal 
value $\mu^*$, the components $\mu_{\alpha}^*$ that are different from zero
must have equal associated costs can be used to determine whether or not an optimal $\mu^*$
has been found. In a real implementation, however, the slackness condition is impossible to
achieve exactly, due to finite machine precision. For this reason we introduce a threshold
value $\varepsilon << 1$ for which the stop condition is
\begin{equation} \label{eq:stopCondition}
 \big| \mu_{\alpha}^{j} \left[J^{\alpha}(u^*(\mu^{j-1})) - J(u^*(\mu^{j-1})) \right] \big| < \varepsilon \;,
\end{equation}
for all $\alpha \in \ACal$ and for some $j \in \mathbb{N}$ sufficiently large.

The procedure for computing the optimal control law for problem~\eqref{eq:minMaxProblem} is summarized in the following algorithm:
\begin{enumerate}
 \item Fix $\mu^{0} \in \SCal^{|\ACal|}$ and set $j = 1$. 
 \item Compute the matrix $\mathbf{P}_0(\mu^{j-1})$ as the positive-definite solution of~\eqref{eq:th:Riccati}.
 \item Make one step through the recursion~\eqref{eq:ProjAlgModified} by computing the term $\mu^{j-1} - \gamma_j Y_j (\mu^{j-1})$
  and projecting on the simplex $\SCal^{|\ACal|}$.
\item Compute the product $\mu^{j}_{\alpha} \left[ J^{\alpha}(u^*(\mu^{j})) 
- J(u^*(\mu^{j})) \right]$ for every $\alpha$ in $\ACal$ and verify 
the stop condition \eqref{eq:stopCondition}\footnote{At the expense of a large computational burden, the term
$J(u^*(\mu^{j}))$ can be obtained by integrating the system equations numerically. Alternatively,
$J(u^*(\mu^{j}))$ can be approximated by $\frac{1}{2}\mathbf{x}_0^{\top} \mathbf{P}_0(\mu^j) \mathbf{x_0}$
(notice that $\lim_{j \to \infty}\left( J(u^*(\mu^{j})) - \frac{1}{2}\mathbf{x}_0^{\top} \mathbf{P}_0(\mu^j) \mathbf{x_0} \right) = 0$).}.
If True, go to step 5;
else increase $j$ by one and go to step 2.
\item Compute the optimal control law as in~\eqref{eq:th:OptimalControl}.
\end{enumerate}

\begin{remark}
Note that, the proposed algorithm can be implemented in an MPC scenario 
by computing the $\mu_{\alpha}$ components at each prediction 
horizon. From a numerical point of view it is more efficient than 
solving the original problem.
\end{remark}
\section{Numerical examples}

\begin{example}
Consider the following multi-model system
\begin{equation} \label{eq:Ex1Sys}
 \dot{x} = A^{\alpha} x + B^{\alpha} u \;, \quad
  x_0 = \begin{bmatrix} 3 & -2 \end{bmatrix}^{\top}
\end{equation}
with $\alpha \in \ACal = \{ 1,2 \}$, 
\begin{displaymath}
A^{1} = \begin{bmatrix} 0 & 1 \\ -1 & -1 \end{bmatrix} \;, \: B^{1} = \begin{bmatrix} 0 \\ 1 \end{bmatrix} \;, \:
A^{2} = 10 A^{1} \;, \: B^{2} = B^{1} \;.
\end{displaymath} 
Also, consider the cost functional
\begin{equation} \label{eq:Ex1Cost}
\begin{split}
 J^{\alpha}(u) &= \dfrac{1}{2} x^{\alpha \top}(10) G x^{\alpha}(10) + \int_{0}^{10} 
  \left( x^{\alpha \top} Q x^{\alpha} + u^{\top} R u \right) \mathrm{d}t \\
    G &= \begin{bmatrix} 5 & 0 \\ 0 & 5 \end{bmatrix} \;, \quad Q = \begin{bmatrix} 50 & 0 \\ 0 & 10 \end{bmatrix} \;, \quad R = 10 \;.
\end{split}
\end{equation}
Note that the worst dominant plant (i.e., the plant whose individual cost $J^{\alpha_w}$ 
is always greater than the other costs $J^{\alpha}$, for all 
$\alpha \in \ACal \setminus \{ \alpha_w \}$ and for all admissible 
controls) is given by $\alpha = 1$ (the slowest plant). We expected the proposed
algorithm to be able to identify it.
In this case we consider a random switching sequence
\begin{align*}
\delta &= \delta_1 \cup \delta_2 \\
 \delta_1 &= \begin{bmatrix} 0 & 0.82 & 1.73 & 1.86 & 2.78 & 3.42 & 3.52 & 3.80 & 4.35 \end{bmatrix} \\ 
 \delta_2 &= \begin{bmatrix}  5.31  &  6.28 &  6.44  &  7.42  &  8.38  &  8.87  &  9.68 & 9.83 & 10 \end{bmatrix}. 
\end{align*}

Setting $\varepsilon = 2 \times 10^{-5}$ and applying the algorithm described in Section~\ref{sec:numAlg}, one obtains the optimal vector
\begin{displaymath}
 \mu^{*} =  \begin{bmatrix} 1 & 0 \end{bmatrix}^{\top}.
\end{displaymath}
After substituting the optimal parameter in~\eqref{eq:th:OptimalControl}, the optimal control $u^{*}$
is obtained. Such control is plotted in Fig.~\ref{fig:Ctrl_Ex1}.
The optimal min-max cost $J(u^{**})$ is equal to $139.1381$. The individual costs that
result from applying the optimal min-max control $u^{**}(\cdot)$ to every $\alpha$-system independently
are $J^{1}(u^{**}) = 139.1381$ and $J^{2}(u^{**}) = 20.7546$, which confirms that the worst plant was
identified correctly. Finally, the state trajectories for each system are presented in
Figs.~\ref{fig:Sys1_Ex1} and~\ref{fig:Sys2_Ex1}.
\begin{figure}[h]
\begin{center}
 \includegraphics[width=0.45\textwidth]{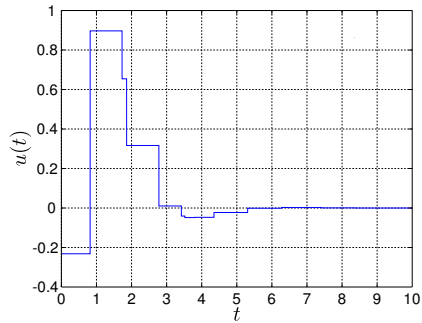}
 \caption{Optimal min-max control.}
 \label{fig:Ctrl_Ex1}
\end{center}
\end{figure}
\begin{figure} 
\centering
\subfloat[$ \alpha = 1 $]{\label{fig:Sys1_Ex1}\includegraphics[width=0.4\textwidth]{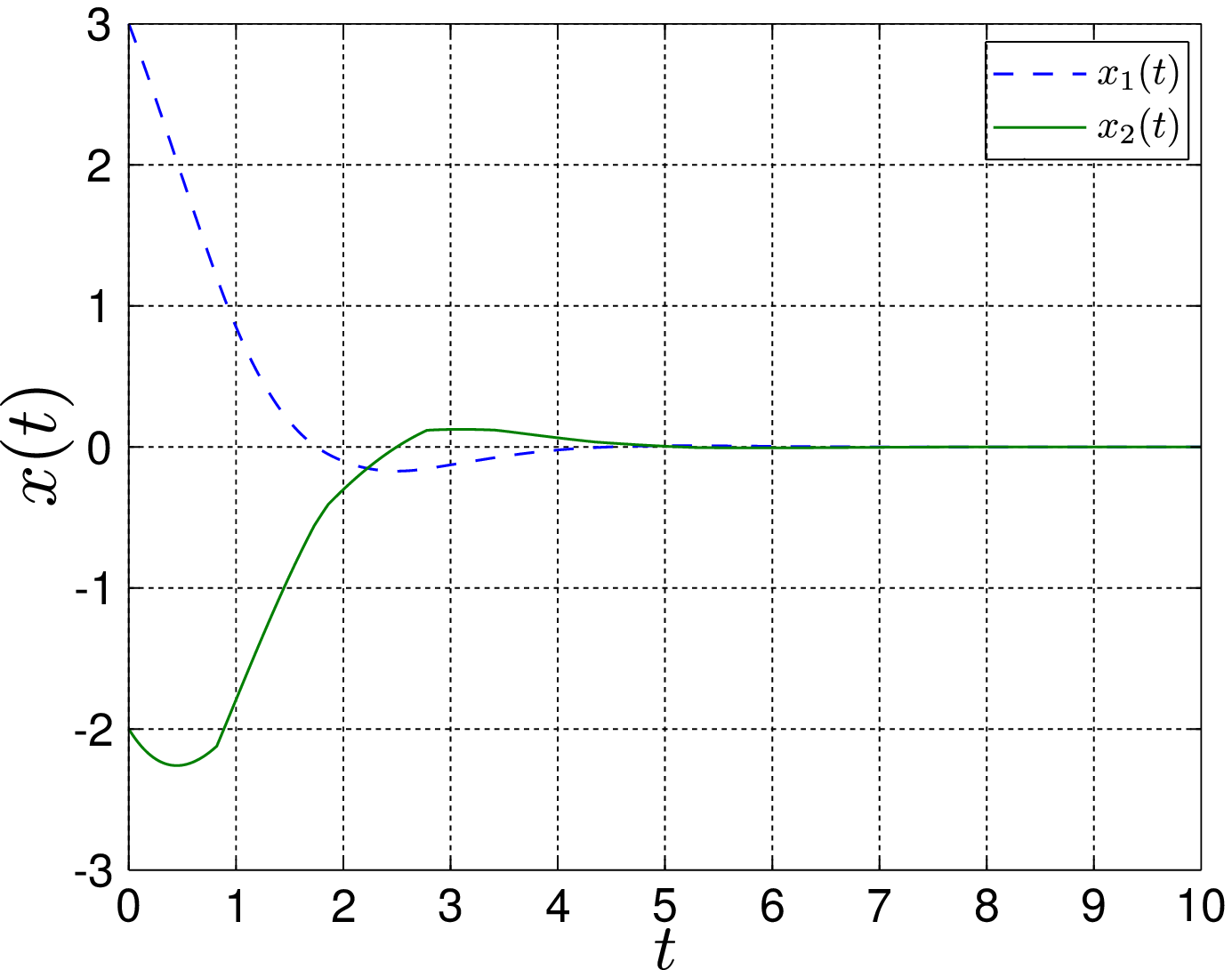}} \;
\subfloat[$ \alpha = 2 $]{\label{fig:Sys2_Ex1}\includegraphics[width=0.4\textwidth]{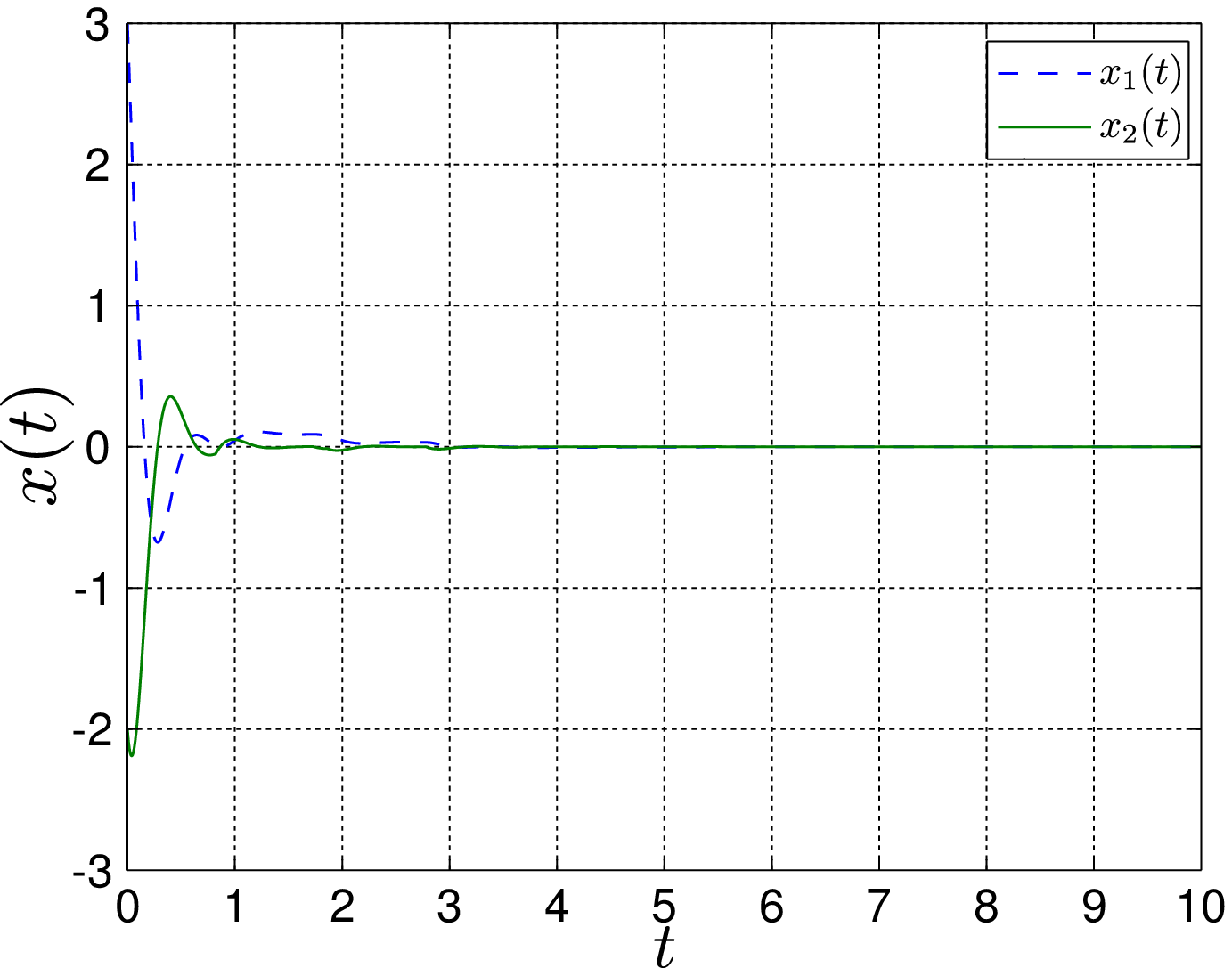}} \\
\caption{Time trajectories for the system \eqref{eq:Ex1Sys} with $\alpha = 1,2$} \label{fig:Sys1-2}
\end{figure}
\end{example}
\begin{example}
Consider now the linear multi-model system
\begin{equation} \label{eq:Ex2Sys}
\begin{split}
       \dot{x}^{\alpha} &= A^{\alpha} x^{\alpha} + B^{\alpha} u \\
             A^{\alpha} &= \begin{bmatrix} 0 & 1 \\ (\alpha-0.9) \Sign(1.1 - \alpha) & -(4 -\alpha)^{2} \end{bmatrix} \;, \quad 
              B^{\alpha} = \begin{bmatrix} 0 \\ \sqrt{\alpha} \end{bmatrix} \;, \quad
              x_0 = \begin{bmatrix} -5 \\ 3 \end{bmatrix} \\
\end{split}
\end{equation}
with $\alpha \in \ACal = \{ 1,2,3,4 \}$ and the cost functional
\begin{equation} \label{eq:Ex2Cost}
\begin{split}
 J(u) &= \dfrac{1}{2} x^{\alpha \top} G x^{\alpha} + \int_{0}^{20} \left( x^{\alpha \top} Q x^{\alpha} + u^{\top} R u \right) \mathrm{d}t \\
    G &= \begin{bmatrix} 5 & 0 \\ 0 & 5 \end{bmatrix} \;, \quad Q = \begin{bmatrix} 50 & 0 \\ 0 & 10 \end{bmatrix} \;, \quad R = 10 \;.
\end{split}
\end{equation}
In this case we consider a random switching sequence $\delta$ given by:
\begin{align*}
 \delta_1 &= \begin{bmatrix} 0 & 0.82 & 1.73 & 1.86 & 2.78 & 3.42 & 3.52 & 3.80 & 4.35\end{bmatrix} \\ 
 \delta_2 &= \begin{bmatrix}  5.31  &  6.28 &  6.44  &  7.42  &  8.38  &  8.87  &  9.68 & 9.83  & 10.26 \end{bmatrix} \\ 
 \delta_3 &= \begin{bmatrix} 11.18 &  11.98  & 12.94 &  13.60 &  13.64 &  14.49 & 15.43 & 16.11 & 16.87 \end{bmatrix}  \\ 
 \delta_4 &= \begin{bmatrix} 17.62 & 18.02 &  18.68 & 20.00 & 20.52 & 22.36  & 23.96 &  25.88 &   27.20 \end{bmatrix} \\
  \delta_5 &= \begin{bmatrix}  27.28 & 28.98 &  30.86 &  32.22 & 33.74 & 35.24 &  36.04 &  37.36 &  40.00 \end{bmatrix}\\
   \delta &= \cup_{i = 1, \ldots 5} \delta_i.
\end{align*}

Applying the algorithm described with $\varepsilon = 5 \times 10^{-3}$ gives
\begin{displaymath}
 \mu^{*} =  \begin{bmatrix} 0.4842 & 0.1842 & 0.1432 & 0.1884 \end{bmatrix}^{\top}
\end{displaymath}
(notice that the model is unstable and marginally stable for $\alpha = 1$ and $\alpha = 4$, respectively).
Fig.~\ref{fig:Ctrl_Ex2} shows the optimal control~\eqref{eq:th:OptimalControl} after plugging in $\mu^*$.
The optimal min-max cost is equal to $J(u^{**}) = 3688.1$ and the individual costs are
\begin{equation} \label{eq:Jss}
 J^{1}(u^{**}) = 3688.1 , \; J^{2}(u^{**}) = 3688.1, \; J^{3}(u^{**}) = 3688.1  \; \text{and} \; J^{4}(u^{**}) = 3688.1 \;.
\end{equation}
In this case, all the plants play a role in the optimal control
and can be viewed as extreme plants in the min-max sense. 
The min-max control strategy is well suited for the multi-model case in 
the sense that, \emph{for every plant}, an appropriate upper bound on the cost is ensured.
For comparison purposes, we have computed each control $u^{\alpha *}$, 
where $u^{\alpha *}$ is defined as the optimal control for the plant $\alpha$.
Next, we have computed the cost that each plant incurs when subject to
$u^{\alpha *}$. All costs are shown in Table \ref{table:Indiv:costs}. It can be seen
that, for all $u^{\alpha *}$, there is at least one plant that incurs a cost which is larger than $J(u^{**})$. This
supports the claim that the min-max is a reasonable criterion.
Finally, the state trajectories for each system are presented in Fig.~\ref{fig:sys_Ex2}.
\begin{table}
\centering
\begin{tabular}{c|cccc} 
 				& $J^{1}(u)$ & $J^{2}(u)$ & $J^{3}(u)$ 				& $J^{4}(u)$      \\ \hline
 $u = u^{1 *}$  & $2384.4$ & $4900.0$ & $7649.7$ 	& $1.22 \times 10^{5}$ \\
 $u = u^{2 *}$  & $2.462 \times 10^{4}$ & $570.77$ & $1526.7$ 	& $6.465 \times 10^{4}$ \\
 $u = u^{3 *}$  & $3.889 \times 10^{4}$ & $1194.2$ & $381.16$ 				& $1.269 \times 10^{4}$ \\
 $u = u^{4 *}$  & $4.454 \times 10^{4}$ & $1749.6$ & $691.35$ 	& $485.76$
\end{tabular}
\caption{The costs that each plant incurs when subject to $u^{\alpha *}$, the optimal control for the $\alpha$-system. Compare
 with~\eqref{eq:Jss}.}
\label{table:Indiv:costs}
\end{table}
\begin{figure}
\begin{center}
 \includegraphics[width=0.45\textwidth]{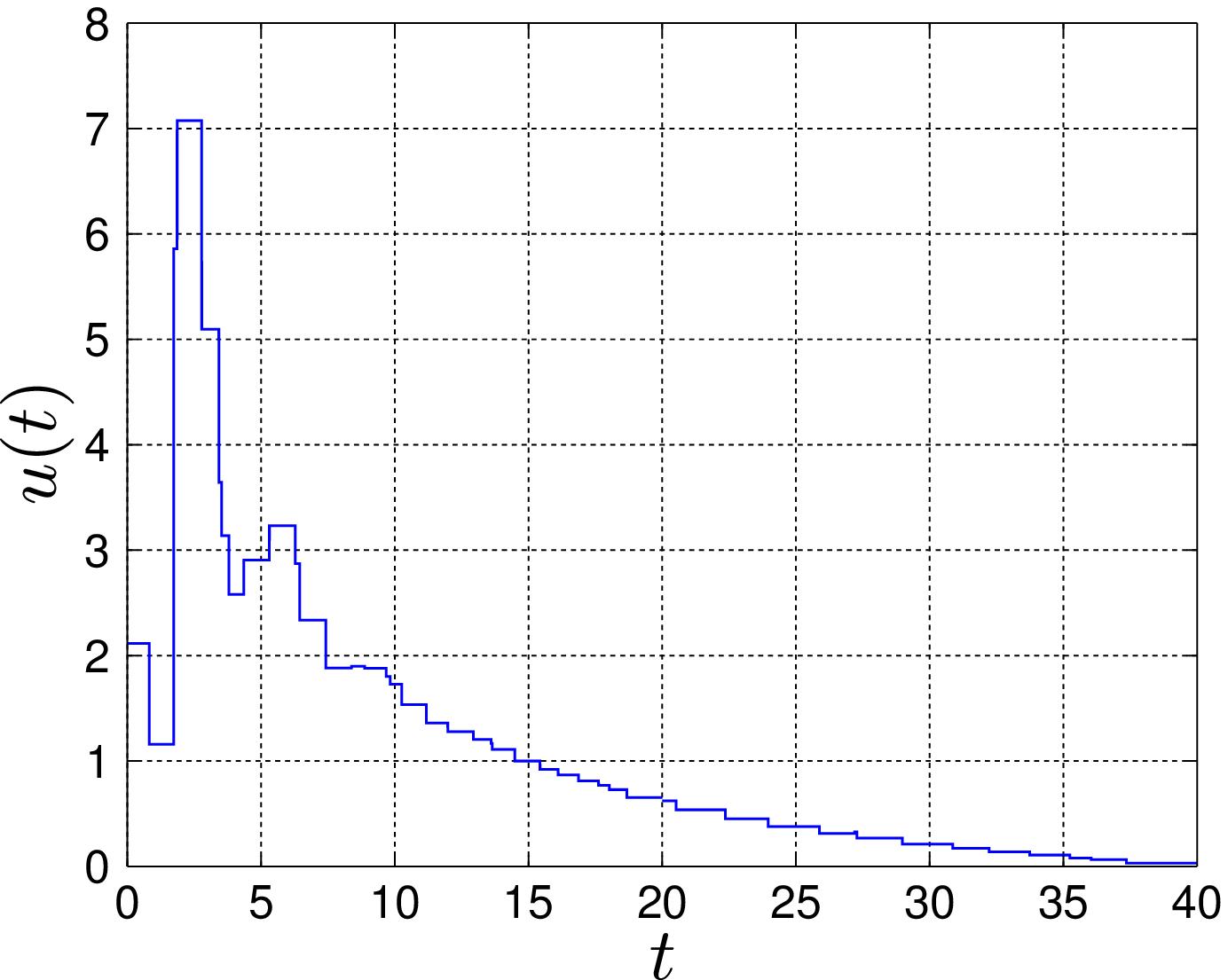}
 \caption{Optimal min-max control, $u^{**}$.}
 \label{fig:Ctrl_Ex2}
\end{center}
\end{figure}
\begin{figure}
\centering
\subfloat[$\quad \alpha = 1$.]{\label{fig:Sys1_Ex2}\includegraphics[width=0.45\textwidth]{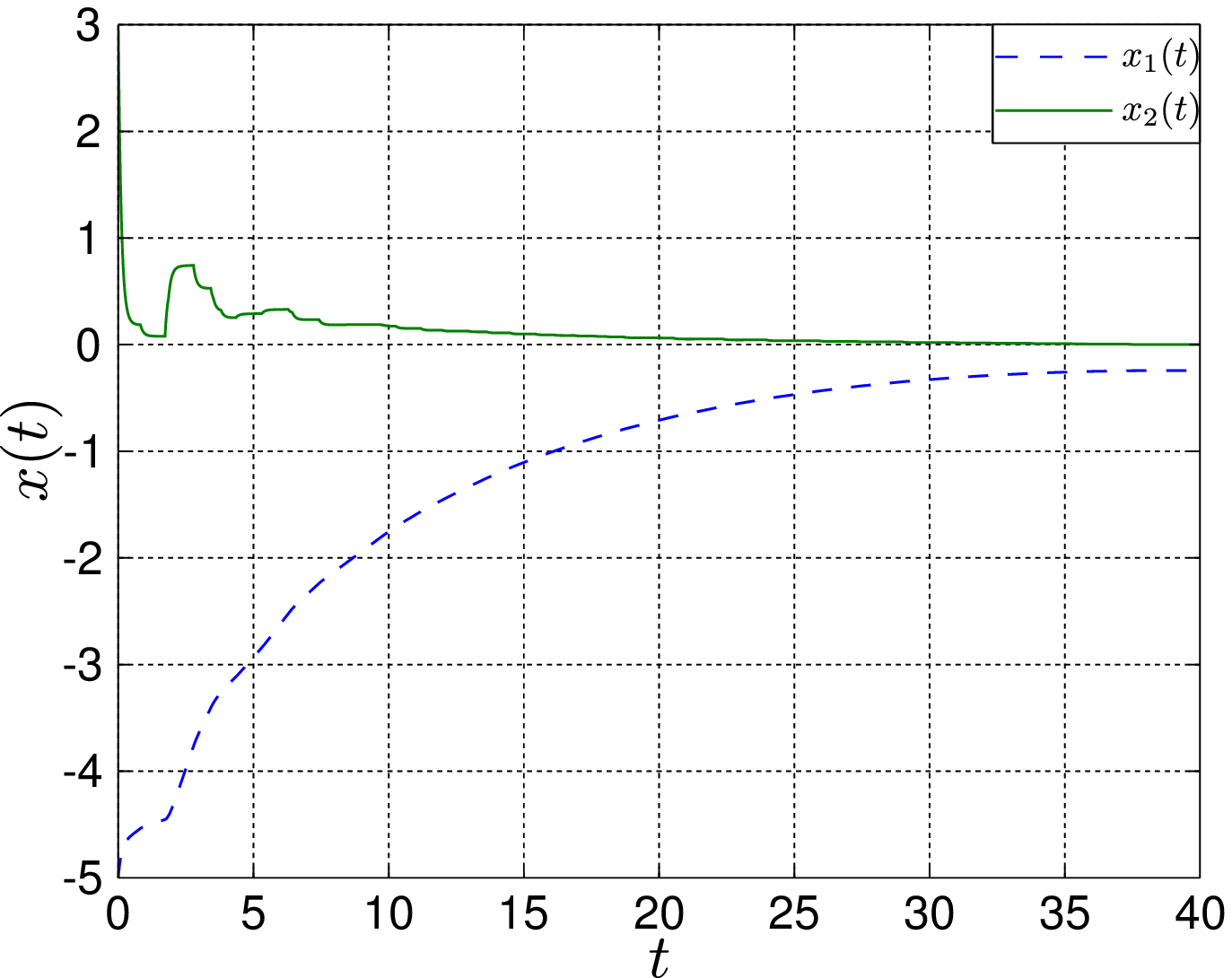}} \; 
\subfloat[$\quad \alpha = 2$.]{\label{fig:Sys2_Ex2}\includegraphics[width=0.45\textwidth]{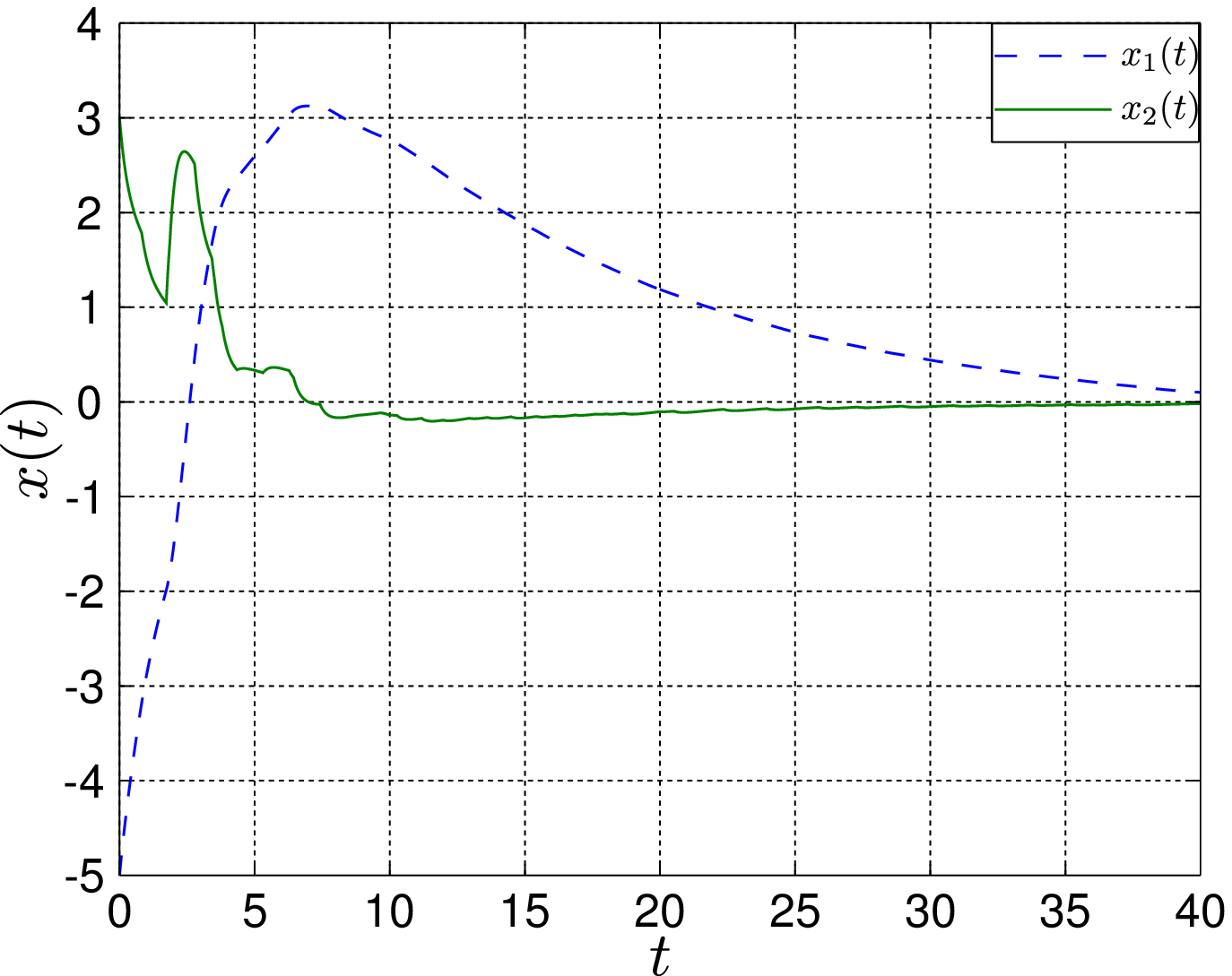}} \\
\subfloat[$\quad \alpha = 3$.]{\label{fig:Sys3_Ex2}\includegraphics[width=0.45\textwidth]{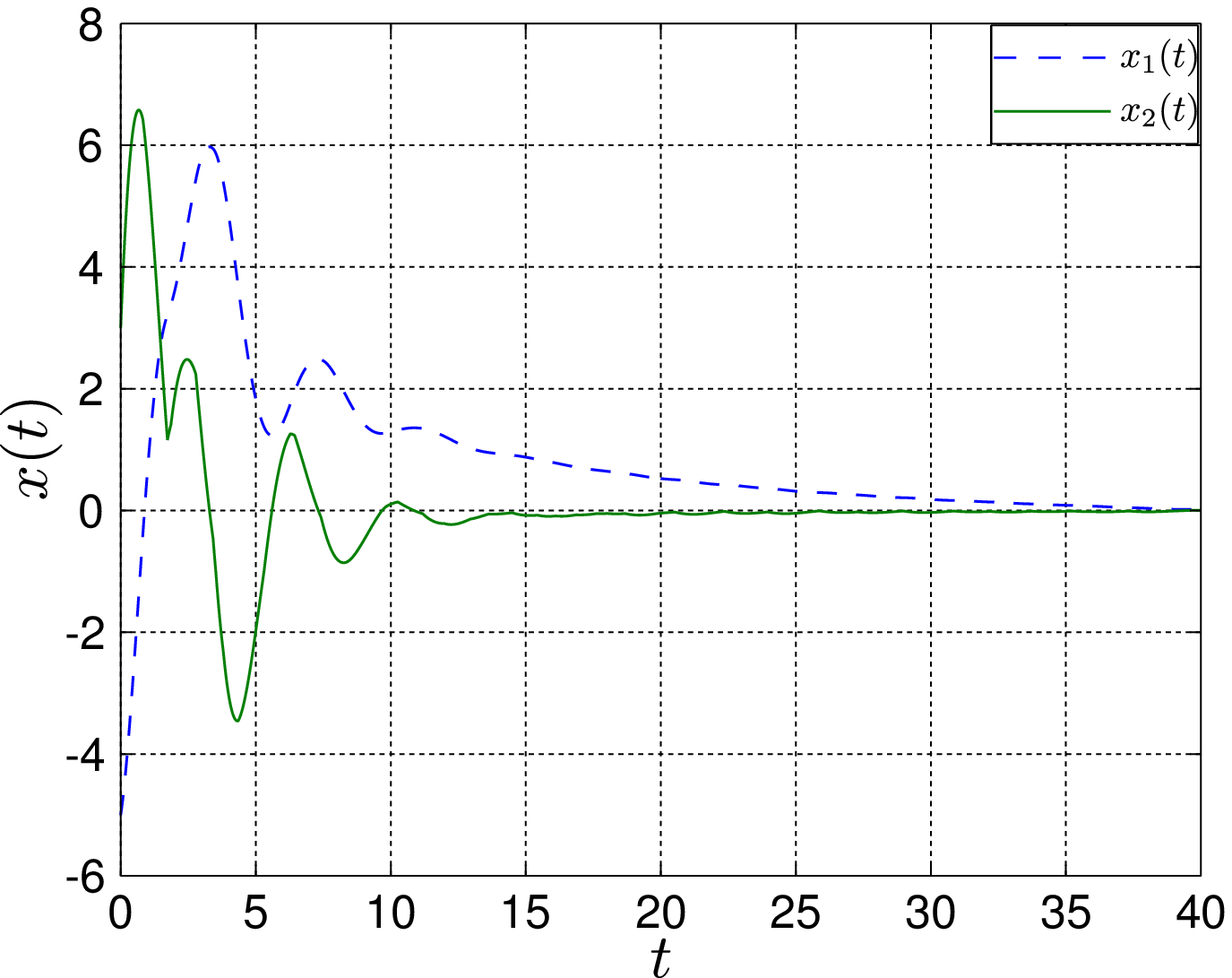}} \;  
\subfloat[$\quad \alpha = 4$.]{\label{fig:Sys4_Ex2}\includegraphics[width=0.45\textwidth]{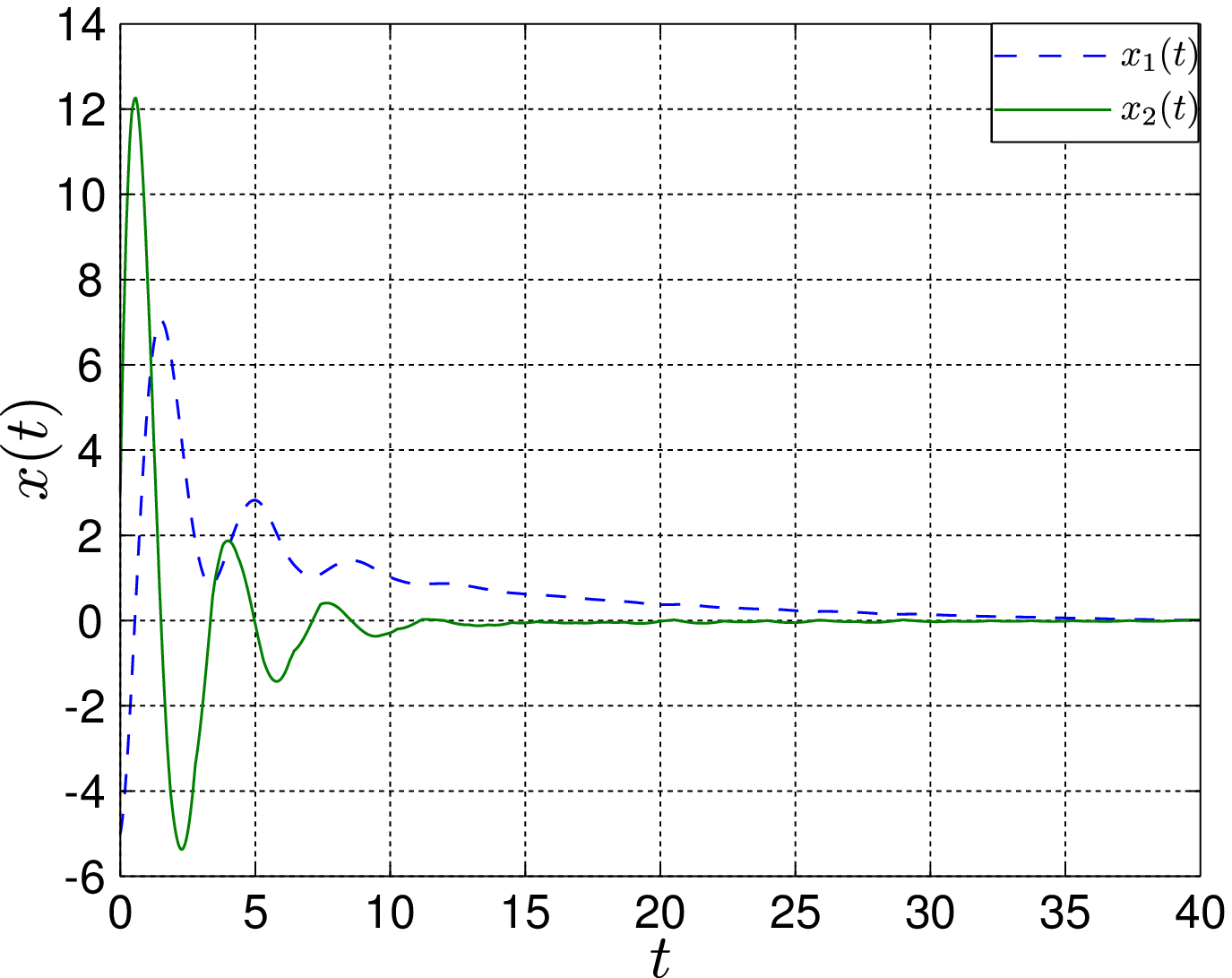}}
\caption{Trajectories of~\eqref{eq:Ex2Sys} subject to $u^{**}$.}\label{fig:Ex2States}
\label{fig:sys_Ex2}
\end{figure}

\end{example}
\section{Conclusions}

By employing generalized gradients we have formulated a multi-model method of Lagrange multipliers. When
applied to the discrete-time min-max optimal control problem, the method leads to a Riccati equation
for an extended plant with a state vector obtained by aggregating the state of each individual plant.
This is in perfect analogy with the continuous-time solution~\cite{Boltyanski2011}. The Riccati equation
is parametrized by $\mu$, a member of a simplex whose elements determine the weight assigned to the cost
of each plant. Non-smooth analysis specifies $\mu$ as the solution of a maximization problem over a simplex,
thus completely characterizing the optimal solution. Non-smooth analysis also leads to a complementary
slackness condition on $\mu$, which turns out to be useful when computing the solution numerically.

Numerical experiments show the effectiveness of the min-max approach in the context of a multi-model
setting, in the sense that the cost of each plant is kept at a reasonable level. It is worth mentioning,
however, that the resulting optimal control is essentially open-loop, since it depends on the
state-trajectories of all the models, whether they are actually realized or not. The problem of obtaining
a state-feedback control thus remains open, but points the direction for continuing this line of research.



\section*{Appendix}
\begin{lemma} \label{lemma:EqProblems}
Consider the problem of finding the maximum element 
among a finite set indexed by $\ACal$, i.e.,
$\max_{\alpha \in \ACal} \{ z^{1}, \ldots, z^{|\ACal|} \}$. 
This problem is equivalent to the following linear program:
\begin{displaymath}
 \operatorname{maximize}_{\mu \in \SCal^{|\ACal|}} 
  \sum_{\alpha \in \ACal} \mu_{\alpha} z^{\alpha} \;.
\end{displaymath}
Moreover, the solution $\mu^{*}$ of the linear program satisfies $\mu_k^{*} = 0$ 
for all $k \notin I := \{\alpha \in \ACal : z^{\alpha} = z^{0}\}$ with 
$z^0 := \max \{z^{1}, \ldots, z^{|\ACal|}\}$.
\end{lemma}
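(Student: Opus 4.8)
The plan is to prove the claimed equivalence through a two-sided estimate of the linear-programming objective, and then to extract the support property of any optimizer from a strict-inequality argument. First I would bound the objective from above: for any $\mu \in \SCal^{|\ACal|}$, since $z^\alpha \leq z^0$ for every $\alpha \in \ACal$ and the weights $\mu_\alpha$ are nonnegative and sum to one, one has $\sum_{\alpha \in \ACal} \mu_\alpha z^\alpha \leq z^0 \sum_{\alpha \in \ACal} \mu_\alpha = z^0$. The reverse inequality follows by exhibiting a feasible maximizer: choosing any index $\alpha_0 \in I$ and setting $\mu = e_{\alpha_0}$ (a vertex of the simplex) yields the admissible value $z^{\alpha_0} = z^0$. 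Hence $\max_{\mu \in \SCal^{|\ACal|}} \sum_{\alpha \in \ACal} \mu_\alpha z^\alpha = z^0 = \max_{\alpha \in \ACal} z^\alpha$, which is precisely the asserted equivalence of optimal values.

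For the support statement, I would start from an arbitrary optimal $\mu^*$, so that $\sum_{\alpha \in \ACal} \mu^*_\alpha z^\alpha = z^0$, and split the sum over $I$ and its complement. Replacing the weight mass on $I$ via $\sum_{\alpha \in I} \mu^*_\alpha = 1 - \sum_{\alpha \notin I} \mu^*_\alpha$ and using $z^\alpha = z^0$ on $I$, the optimality identity collapses to $\sum_{\alpha \notin I} \mu^*_\alpha (z^\alpha - z^0) = 0$. Each summand is the product of a nonnegative weight $\mu^*_\alpha$ with a strictly negative factor $z^\alpha - z^0$ (strict because $\alpha \notin I$ forces $z^\alpha < z^0$), hence nonpositive; a sum of nonpositive terms can vanish only if every term vanishes, and since the second factor is nonzero this forces $\mu^*_k = 0$ for all $k \notin I$.

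The argument is entirely elementary and I anticipate no genuine obstacle; the one place demanding care is the strictness $z^\alpha < z^0$ off $I$, which is exactly what turns each vanishing term into the conclusion $\mu^*_\alpha = 0$ rather than a mere inequality. A minor subtlety worth recording is that the optimizer need not be unique --- every convex combination of the vertices $e_\alpha$ with $\alpha \in I$ is optimal --- so the statement is best read as a property of the support shared by all optimizers, which is exactly the reading needed for the complementary-slackness conclusion in the proof of Theorem~\ref{th:Contribution}.
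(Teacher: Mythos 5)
Your proof is correct and follows essentially the same route as the paper: a two-sided bound (using $z^\alpha \le z^0$ and a vertex of the simplex) for the equivalence of optimal values, and a decomposition of the optimal sum over $I$ and its complement combined with the strictness $z^\alpha < z^0$ off $I$ for the support property. The only cosmetic difference is that you argue the support claim directly (each nonpositive term must vanish) whereas the paper phrases it as a contradiction, so no further comparison is needed.
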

\begin{proof}
Notice that
\begin{align*}
 z^{0} = \sum_{\alpha \in \ACal} \mu_{\alpha} z^{0} \geq 
  \sum_{\alpha \in \ACal} \mu_{\alpha} z^{\alpha} \quad \text{for all } \mu \in \SCal^{|\ACal|} \;,
\end{align*}
so that $z^{0} \geq \zeta^{*} := \max_{\mu \in \SCal^{|\ACal|}} \sum_{\alpha \in \ACal} \mu_{\alpha} z^{\alpha}$.
On the other hand, we have $\zeta^{*} \geq z^{\alpha}$ for all $\alpha \in \ACal$. Therefore,
$z^{0} = \zeta^{*}$. The proof is established by contradiction: suppose that there exists some indices
$b \in \bar{I}: = \{ \alpha \in \ACal : z^{\alpha} < z^{0} \}$ such that $\mu_{b}^{*} \neq 0$. We have
\begin{align*}
 z^{0} &= \sum_{\alpha \in \ACal} \mu_{\alpha}^{*} z^{\alpha} \\
       &= \sum_{\alpha \in I} \mu_{\alpha}^{*} z^{0} + \sum_{\alpha \in \bar{I}} \mu_{\alpha}^{*} z^{\alpha} \\
       &\leq \sum_{\alpha \in I} \mu_{\alpha}^{*} z^{0} + \sum_{\alpha \in \bar{I}} \mu_{\alpha}^{*} \tilde{z} \;,
\end{align*}
where $\tilde{z} = \max_{\alpha \in \bar{I}} z^{\alpha}$. The last inequality implies that
\begin{displaymath}
 \sum_{\alpha \in \bar{I}} \mu_{\alpha}^{*} z^{0} = \left( 1 - \sum_{\alpha \in I} \mu_{\alpha}^{*} \right) z^{0} \leq \sum_{\alpha \in \bar{I}} \mu_{\alpha}^{*} \tilde{z} \;.
\end{displaymath}
In other words, $z^{0} \leq \tilde{z}$, the desired contradiction. 
\end{proof}

\bibliographystyle{wileyj}
\bibliography{biblio}
\end{document}